\newtheorem{theorem}{Theorem}
\newtheorem{lemma}{Lemma}
\newtheorem{definition}{Definition}
\newtheorem{proposition}{Proposition}
\theoremstyle{remark}
\newtheorem*{remark}{Remark}
\begin{document}
\newcommand{\real}{\textrm{Re}\:}
\newcommand{\QQW}{\mathfrak{X}}
\newcommand{\sto}{\stackrel{s}{\to}}
\newcommand{\Tr}{\textrm{Tr}\:}
\newcommand{\Ran}{\textnormal{Ran}}
\newcommand{\supp}{\textrm{supp}\:}
\newcommand{\evs}{\textnormal{evs}\:}
\newcommand{\Ker}{\textnormal{Ker}\:}
\newcommand{\sign}{\textnormal{sign}\:}
\newcommand{\diag}{\textnormal{diag}}
\newcommand{\wto}{\stackrel{w}{\to}}
\newcommand{\ssto}{\stackrel{s}{\to}}
\newcommand{\epstr}{{\tilde \epsilon}_0}
\newcommand{\beps}{\pmb{\epsilon}}
\newcounter{foo}
\newcommand{\sslim}{\textnormal{s--}\lim}
\newcommand{\wlim}{\textnormal{w--}\lim}
\providecommand{\norm}[1]{\lVert#1\rVert}
\providecommand{\abs}[1]{\lvert#1\rvert}
\providecommand{\absbm}[1]{\pmb{|}\mspace{1mu}#1\mspace{1mu}\pmb{|}}
\newcommand{\br}{{\bf r}}
\newcommand{\by}{{\bf y}}
\newcommand{\bq}{{\bf q}}
\newcommand{\bp}{{\bf p}}

\title{Proof of the Super Efimov Effect }

\author{Dmitry K. Gridnev}
\affiliation{FIAS, Ruth-Moufang-Stra{\ss}e 1, D--60438 Frankfurt am Main,
Germany}
\altaffiliation[On leave from:  ]{ Institute of Physics, St. Petersburg State
University, Ulyanovskaya 1, 198504 Russia}

\begin{abstract}
We consider the system of 3 nonrelativistic spinless fermions in two dimensions,
which interact through spherically-symmetric pair interactions. 
Recently a claim has been made for the existence of the so-called super Efimov
effect [Y. Nishida {\it et al.}, Phys. Rev. Lett. 110, 235301 (2013)]. Namely,
if 
the interactions in the system are fine-tuned to a p-wave resonance, an infinite
number of bound states appears, whose negative energies are scaled according to 
the double exponential law. We present the 
mathematical proof that such system indeed has an infinite number of bound
levels. We also prove that 
$\lim_{E \to 0} |\ln|\ln E||^{-1} N(E) = 8/(3\pi) $, where $N(E)$ is the number
of bound states with the energy less than $-E <0$. 
The value of this limit is equal exactly to the value derived in [Y. Nishida
{\it et al.}] using 
renormalization group approach. Our proof resolves a recent controversy about
the validity of results in [Y. Nishida {\it et al.}]. 
\end{abstract}

\maketitle

\section{Introduction}\label{sec:1}

Efimov effect first discovered by V. Efimov in \cite{1} is one of the most
intriguing phenomena in physics. This effect appears in 3-body systems in
3-dimensional space, which interact through 
short-range pair-potentials. It is always possible to tune the couplings of the
interactions in such a way that none of the particle pairs has a negative energy
bound state, but at least two 
pairs have a resonance at zero energy. In this case the 3-body system exhibits
an infinite sequence of bound levels, where the energy of the $n$-th level
decreases exponentially with $n$. 
The rigorous proof of this effect in \cite{2,3} is a highlight of mathematical
physics. Suppose that three particles are identical, the pair interaction is  
tuned to the zero energy resonance, and let $N_E$ be the total number of 3-body
bound states with the 
energy less than $-E <0$. Then $\lim_{E \to 0} |\ln E|^{-1} N_E = s_0/(2\pi)$,
where $s_0$ is the root of the known transcendental 
equation expressed in elementary functions \cite{3}.

Relatively recently the authors in \cite{4} considered the system of 3 spinless
fermions in flatland using field-theoretical methods. 
The spherically symmetric pair interaction of fermions was tuned in such 
a way that pairs of fermions had no negative spectrum but were at the coupling
constant threshold \cite{5,6}, so that a negligible increase of the coupling 
constant would result in the formation of an antisymmetric 2-body bound state with
negative energy. In this case one says that the interactions 
are tuned to the zero energy p-wave resonance. In \cite{4} the authors came to
the conclusion that such system has two 
infinite series of bound states each corresponding to the orbital 
angular momentum $L = \pm 1$. The energies of these bound states $E_n$ for large
$n$ were predicted to approach the form 
$E_n \sim -\exp\bigl(-2e^{\frac{3\pi n}4 + \theta}\bigr)$, 
where $\theta$ is a constant defined modulo $3\pi/4$. The authors termed this
phenomenon "super Efimov effect". 
If $N_E$ is the total number of 3-body bound states with the energy less than
$-E <0$
(for all values of the angular momentum) then the results in \cite{4} predict
that 
\begin{equation}\label{1}
\lim_{E\to 0} |\ln|\ln E|^{-1} N_E = 8/(3\pi) . 
\end{equation}
There are two interesting features about the super Efimov effect.  
First, it turns out that the system of 3 spinless fermions in two dimensions may
have an infinite number of bound states, though 
the same system in 3 dimensions has at most a finite number of levels with
negative energy \cite{7}. Secondly, the energy of the $n$-th level goes
extremely fast to zero with increasing $n$. This is reflected in the double
logarithm in (\ref{1}) and differs from the Efimov effect of 3 bosons in
3-dimensional space, where a single logarithm enters the similar formula
\cite{3}. 

Recently in the physics literature there were raised doubts about whether the
super Efimov effect 
is real \cite{hammer,8,9}. In \cite{8} it was claimed that the sequence of
levels with double exponential scaling does not exist and instead 
there emerges another infinite sequence of levels, which approaches the scaling
law $E_n \sim -\exp (n^2 \pi^2 /2Y) $  with $Y >0$ being 
a non-universal constant. The findings in \cite{8} are in contradiction with Eq.~(\ref{1}). In
\cite{9} the authors 
observed the super Efimov effect in the lowest order of the hyperspherical
expansion. 
However, the value of the limit in (\ref{1}) was found to be $2(16/9 -1/4)^{-1/2}$; the 
inclusion of higher order effects could not provide definitive conclusions on
whether the infinite sequence of levels exists. 
In the present paper we shall provide a rigorous mathematical proof of (\ref{1}). 
Hence, we demonstrate that the super Efimov effect is indeed real and the
constant on the rhs of (\ref{1}) coincides exactly with the one predicted in \cite{4}. 

The basic idea behind the proof of (\ref{1}) stems from \cite{2}, namely, one
uses symmetrized Faddeev equations and the Birman-Schwinger 
principle \cite{birman,5,6,10} for counting eigenvalues. 
Like in \cite{2} we reduce the problem to counting the eigenvalues in the
interval $(0,\infty)$ of an integral operator, 
which depends on the energy. 
Let us explain, however, the major difference. In \cite{2} when the energy
approached zero this integral operator approached (in the strong sense) a
bounded 
integral operator, which had a nonempty essential spectrum in the interval
$(1,\infty)$. In the 2-dimensional case a similar  integral operator maintains
discrete 
spectrum but its norm goes to infinity when the energy goes to zero. The control
of appearing error terms becomes challenging  because their norm diverges as well. 

We shall use the following notations. An abstract Hilbert space $\mathcal{H}$ is
assumed to be separable, $\mathcal{C}(\mathcal{H})$ denotes the ideal of all
compact operators on $\mathcal{H}$. 
For a self-adjoint operator $A \in \mathcal{C}(\mathcal{H})$ we denote by
$\lambda_1 (A), \lambda_2 (A), \ldots $ its non-negative eigenvalues (counting
multiplicities) 
in descending 
order; if this sequence terminates at $n_0$ we set $\lambda_{n_0+1} (A) =
\lambda_{n_0+2} (A) = \cdots =0$. For a self-adjoint operator $A$ on
$\mathcal{H}$ we shall denote by  $D(A)$, $\sigma(A)$ and $\sigma_{ess} (A)$ the
domain, the spectrum, and 
the essential spectrum of $A$ respectively \cite{11}. $A \geq 0$ means that $(f,
Af) \geq 0$ for all $f \in D(A)$, while 
$A \ngeq 0$ means that there exists $f_0 \in D(A)$ such that $(f_0, Af_0) < 0$. 
$\mathfrak{n}(A, a)$ is the number of eigenvalues of $A$ (counting
multiplicities) that are larger than $a >0$. 
By $\mu_n (A)$ we denote singular values of $A \in \mathcal{C}(\mathcal{H})$ listed in descending order 
\cite{17}. Similarly, 
$\mathfrak{n}_\mu (A, a) = \mathfrak{n} (|A|, a)$ is the number of singular
values of $A\in \mathcal{C}(\mathcal{H})$ that are larger than $a >0$.  
$\|A\|_{HS}$ is the Hilbert-Schmidt norm of an operator $A $.  For an interval
$\Omega \subset \mathbb{R}$ the function 
$\chi_\Omega : \mathbb{R} \to \mathbb{R}$ is such that $\chi_\Omega (x) = 1$ if
$x\in \Omega$ and $\chi_\Omega (x) = 0$ otherwise. 
$\diag \{a_1, a_2 , a_3\}$ denotes a $3\times 3$ matrix 
with the diagonal entries $a_1, a_2 , a_3$ and zero off-diagonal elements. 

\section{Main Result}\label{sec:2}

We shall consider 3 spinless fermions in $\mathbb{R}^2$ that interact through
$v(|r_i -r_k |) \leq 0$, where $r_i$ are particle position vectors. For pair
interactions 
we assume that $v$ is a Borel function,  $|v(x)| \leq \alpha_1 e^{-\alpha_2
|x|}$ with $\alpha_{1,2} >0$ being constants. 
Regarding the fermion's mass $m$ we shall use the units, where $\hbar^2 /m =1$.
The Hamiltonian of this system reads
\begin{equation}\label{3}
 H = H_0 + \sum_{1 \leq i<k\leq 3} v(|r_i -r_k |) , 
\end{equation}
where $H_0$ is the kinetic energy operator with the removed center of mass
motion. Due to the Pauli principle $H$ should be considered on an
antisymmetrized space, 
which is constructed below. 
For $k=1,2,3$ let $x_k, y_k \in \mathbb{R}^2$ be three sets of Jacobi
coordinates, which are shown in Fig.~\ref{fig:1} 
\begin{gather}
 x_k = r_i - r_j \\
y_k = \frac{2}{\sqrt 3} \left[r_k - \frac{r_i +r_j}2 \right] , 
\end{gather}
where $(k,i,j)$ is an odd permutation of $(1,2,3)$. The scalings are chosen so
that in all coordinate sets $H_0 = - \Delta_{x_{k}}-\Delta_{y_{k}}$. The
coordinate 
sets  are connected through the orthogonal linear transformation 
\begin{equation}\label{6}
 \begin{pmatrix}
  x_i \\
y_i
 \end{pmatrix}
=
 \begin{pmatrix}
  -\frac 12 &\frac{\sqrt 3}2 \\
  -\frac{\sqrt 3}2  &-\frac 12
 \end{pmatrix}
 \begin{pmatrix}
  x_k \\
y_k
 \end{pmatrix} , 
\end{equation}
where $(i,k,j)$ is an odd permutation of $(1,2,3)$. 
Let us write Jacobi coordinates as functions of particle position vectors, that
is 
$x_k = x_1 (r_1 , r_2, r_3)$ and $y_k = y_1 (r_1 , r_2, r_3)$. And let $p =
\bigl(p(1), p(2), p(3)\bigr)$ be a permutation of $(1,2,3)$. 
Then by definition $p (x_1) = x_1 (r_{p(1)} , r_{p(2)}, r_{p(3)})$ and $p (y_1)
= x_1 (r_{p(1)} , r_{p(2)}, r_{p(3)})$. 
We define the action of the permutation operator $p$ on $L^2 (\mathbb{R}^4)$ as
$p f (x_1 , y_1) = f(p(x_1), p(y_1))$. Now we define the subspace 
of antisymmetric square-integrable functions as $L^2_A (\mathbb{R}^4) =
\bigl\{\psi \bigl|\psi \in L^2 (\mathbb{R}^4)\textnormal{ and }p\psi =
(-1)^{\pi(p)}\psi\bigr\}$, whereby 
$\pi(p)$ is the parity of the permutation $p$. By standard results \cite{11,12}
the Hamiltonian $H$ is self-adjoint on $L^2_A (\mathbb{R}^4)$ with the domain
$D(H) = L^2_A (\mathbb{R}^4)\cap \mathcal{H}^2 (\mathbb{R}^4)$, where 
$\mathcal{H}^2 (\mathbb{R}^4)$ is the corresponding Sobolev space \cite{12,13}. 

The subsystem of 2 fermions is described by the Hamiltonian $h(1)$, where 
\begin{equation}\label{7}
h(\lambda)   = -\Delta_x + \lambda v(|x|) 
\end{equation}
and $\lambda > 0$ is a coupling constant. The Hamiltonian (\ref{7}) acts on the
subspace $L_A^2 (\mathbb{R}^2)$, where 
$L_A^2 (\mathbb{R}^2) = \bigl\{\phi \bigl|\phi \in L^2
(\mathbb{R}^2)\textnormal{ and }\phi(x) = -\phi(-x)\bigr\}$. 
$h(\lambda)$ is self-adjoint on $L_A^2 (\mathbb{R}^2) $ with the domain 
$D(h) = L^2_A (\mathbb{R}^2)\cap \mathcal{H}^2 (\mathbb{R}^2)$. We shall say
that the interaction $v(x)$ is \textit{tuned to the p-wave zero energy
resonance} if $h(1) \geq 0$ and 
$h(1+\varepsilon) \ngeq 0$ for all $\varepsilon >0$. 

\begin{figure}
\includegraphics[height=0.16\textheight]{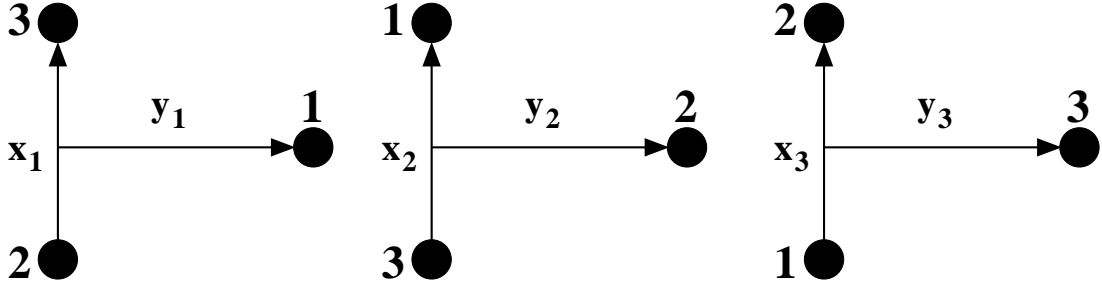}
\caption{Three sets of Jacobi coordinates. The picture shows only directions of
the vectors, the scales are chosen in order to ensure that $H_0 =
-\Delta_{x_k}-\Delta_{y_k}$ for $k=1,2,3$.}
\label{fig:1}
\end{figure}

Let $N_z (H)$ denote the number of bound states of $H$, whose energy is less
than $-z^2$. Our aim in this paper is to prove 
the following 
\begin{theorem}\label{th:1} Suppose that the interactions in (\ref{3}) are tuned
to the zero energy p-wave resonance. Then 
$\lim_{z \to 0} |\ln|\ln z^2 | |^{-1} N_z (H) = 8/(3\pi). $
\end{theorem}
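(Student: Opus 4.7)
The plan is to follow the Birman--Schwinger strategy of Sobolev \cite{2}, adapted to two dimensions and to a $p$-wave resonance. Since $v \le 0$, I would write $v = -w^2$ with $w = \sqrt{-v}$ and apply the Birman--Schwinger principle in a Faddeev-symmetrized form: modulo a finite additive correction, $N_z(H) = \mathfrak{n}(T(z), 1)$, where $T(z)$ is a compact integral operator built from $w_i(H_0 + z^2)^{-1}w_j$, acting on a space of antisymmetric $L^2$ functions. Total antisymmetry under $S_3$ collapses the usual three Faddeev channels to a single one, leaving a single $z$-dependent operator to analyze.

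Next, I would isolate the leading singular part of the two-body subblock $w(h_0+q^2)^{-1}w$ at the $p$-wave threshold. Standard coupling-constant-threshold theory \cite{5,6} provides a rank-two near-null space of $h(1)$ spanned by two resonance functions of angular momentum $L = \pm 1$. Modulo a uniformly bounded remainder $R(q)$, one has
\begin{equation*}
w(h_0+q^2)^{-1}w = g(q)\, P + R(q),
\end{equation*}
where $P$ is the rank-two projection onto this doublet and the scalar $g(q)$ diverges logarithmically as $q \to 0$ (this is the 2D analogue of the $1/q$ Efimov singularity, damped by the $p$-wave centrifugal barrier). Substituting into $T(z)$ and integrating out the $P$-projected subspace produces a leading-order operator $T_0(z)$ whose kernel acts on a spectator momentum $\bq$ with a slowly varying logarithmic weight; rotation invariance then splits $T_0(z)$ into two identical sectors $L = \pm 1$.

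In each sector I would pass to radial variables and perform the two successive logarithmic substitutions $\tau = \ln|\bq|^{-1}$ and $s = \ln\tau$, under which $T_0(z)$ becomes, to leading order, a convolution operator on $L^2(0, \ln|\ln z^2|)$ whose symbol can be written in closed form. A standard Weyl-type symbol-counting argument then gives $\tfrac{4}{3\pi}\,\ln|\ln z^2| + o(\ln|\ln z^2|)$ eigenvalues above $1$ per sector, and the two sectors $L = \pm 1$ together yield the desired constant $\tfrac{8}{3\pi}$, in agreement with the prediction of \cite{4}.

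The principal obstacle, as flagged in the introduction, is that $\|T(z)\| \to \infty$ as $z\to 0$, so one cannot pass to a strong limit of $T(z)$ directly. To circumvent this I would use the Weyl--Ky~Fan two-sided estimate
\begin{equation*}
\mathfrak{n}(T_0(z),1+\delta) - \mathfrak{n}_\mu(T(z) - T_0(z), \delta) \le \mathfrak{n}(T(z),1) \le \mathfrak{n}(T_0(z), 1-\delta) + \mathfrak{n}_\mu(T(z) - T_0(z),\delta),
\end{equation*}
and show, via Hilbert--Schmidt and Schatten-class bounds on the $R(q)$-induced contributions, that $\mathfrak{n}_\mu(T(z) - T_0(z),\delta) = o(\ln|\ln z^2|)$ for every fixed $\delta > 0$, even though the difference $T(z) - T_0(z)$ itself diverges in operator norm. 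The leading coefficient extracted from $\mathfrak{n}(T_0(z),1\pm\delta)$ is continuous in $\delta$ at $\delta = 0$, so first sending $z \to 0$ and then $\delta \to 0$ yields the claimed limit. Delivering this error control quantitatively---keeping track of the many competing logarithmic scales produced by the substitution $s = \ln\tau$---is where I expect the real work to lie.
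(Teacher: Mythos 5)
Your proposal is in close alignment with the paper's strategy: Birman--Schwinger plus a Faddeev-symmetrized reduction (Lemma~\ref{lem:1}), isolation of the rank-two $p$-wave resonance doublet with a singular logarithmic weight, splitting into $L=\pm 1$ sectors, a double-logarithmic change of variable that sends the leading operator onto a growing interval of length $\sim |\ln|\ln z^2||$, and a Weyl--Ky~Fan two-sided inequality to propagate the leading asymptotics through the full operator. Where you diverge in substance is in two places, one minor and one that I would flag as a genuine gap in the argument as stated.

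The minor difference: after the change of variable $x=\xi(s)=\int_0^s g_z^2(t)\,t\,dt$ the paper does not invoke a Weyl symbol-counting argument for a general convolution operator. Instead the leading operator is reduced (Eqs.~(\ref{81})--(\ref{100})) to the particularly simple block Volterra operator $\mathcal{T}(z)$ with kernels $\chi_{\{x\le x'\}}$ and $\chi_{\{x'\le x\}}$, whose eigenvalue problem (\ref{107})--(\ref{108}) differentiates to a pair of first-order ODEs solved explicitly by trigonometric functions, with quantization condition giving the closed form (\ref{109}). This is more elementary than symbol counting but functionally equivalent; either route yields the same constant.

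The more substantive concern is your remainder claim $\mathfrak{n}_\mu(T(z)-T_0(z),\delta)=o(|\ln|\ln z^2||)$ for every fixed $\delta>0$. This is stronger than what the paper proves and, as stated, is not obviously true. The paper's Theorem~\ref{th:09.3} gets only $\varlimsup_{z\to 0}|\ln|\ln z^2||^{-1}\mathfrak{n}_\mu(\mathcal{R}(z),\varepsilon)<\varepsilon$ after first choosing a sufficiently small momentum cutoff $r_\varepsilon$, and this is because the dominant pieces of the remainder (the $\mathcal{R}_0$, $\mathcal{R}_1$, $\mathcal{R}_2$ terms in (\ref{09.22})) satisfy only the Hilbert--Schmidt bound $\|\cdot\|_{HS}^2\le c\,r_\varepsilon^2|\ln|\ln z^2||$ of Lemma~\ref{lem:2}, which translates into a count $\mathfrak{n}_\mu\le c\varepsilon^{-2}r_\varepsilon^2\,|\ln|\ln z^2||$ that is $O(|\ln|\ln z^2||)$, not $o$. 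The entire architecture requires the extra tunable parameter $r_\varepsilon$ whose smallness beats down the coefficient of the diverging $|\ln|\ln z^2||$ factor, followed by a diagonal limit. Your sketch does not mention this cutoff scale at all, and without it the error-control step you flag as the hard part will indeed fail: the remainder grows at the same $\ln\ln$-rate as the leading term, so an unqualified $o(\cdot)$ claim cannot be delivered by Hilbert--Schmidt or Schatten bounds alone. To repair the proposal you would need to introduce the truncation $\mathfrak{X}_{[0,r_\varepsilon]}$, split off the tail terms $\mathcal{R}_3,\mathcal{R}_4$ (which genuinely are $o(|\ln|\ln z^2||)$ by the trace-ideal estimates of Lemma~\ref{newlem:6}), and prove the leading-term count is independent of $r_\varepsilon$ before letting $r_\varepsilon\to 0$ and $\varepsilon\to 0$ at the very end, as the paper's proof of Theorem~\ref{th:1} does.
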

\begin{remark}
 Theorem 1 provides a firm mathematical footing for the super Efimov effect. We do not prove it here, but one can
show that 
$\lim_{z \to 0} |\ln|\ln z^2 | |^{-1}  N^\pm_z (H) = 4/(3\pi)$, where $N^\pm_z
(H)$ is the number of bound states of $H$, which 
have the energy less than $-z^2$ and angular momentum $\pm 1$ respectively. This
agrees with the results in \cite{4}.  
\end{remark}

From now on we shall always assume that the interaction of 2 spinless fermions
is tuned to the zero energy p-wave resonance. 
Consider the integral operator on $L^2_A (\mathbb{R}^2)$
\begin{equation}\label{8}
 k(z) := |v|^{\frac 12} (-\Delta_x + z^2)^{-1} |v|^{\frac 12} , 
\end{equation}
which is called the Birman-Schwinger (BS) operator. Its integral kernel has the
form (eq. (7.2) in \cite{5}) 
\begin{equation}
 k(x,y) = (2\pi)^{-1} |v(x)|^{\frac 12} K_0 (z|x-y|) |v(y)|^{\frac 12} , 
\end{equation}
where 
\begin{gather}
 K_0 (t) = -\tilde I_0 (t) \ln \frac t2 + \sum_{m=0}^\infty \frac{t^{2m}}{2^{2m}
(m!)^2} \psi (m+1) , \\
\tilde I_0 (t) = \sum_{l=1}^\infty \bigl[(l!)\Gamma (l+1)\bigr]^{-1} 2^{-2l}
t^{2l} , \label{11}\\
\psi (j) = 1 + \frac 12 + \cdots \cdots \frac 1j -1 - C
\end{gather}
with $C$ being Euler's constant. Note that contrary to \cite{5} the summation in
(\ref{11}) starts from $l=1$ because the term produced in $k(z)$ by $l=0$ is
identically zero on 
$L^2_A (\mathbb{R}^2)$ (this term, which is responsible for the projection
operator term in (7.3) in \cite{5}, is absent in our case). Thus we have
\cite{5} 
$k(z) = [\sum_{k=0}^\infty A_k z^{2k}]z^2\ln z + [\sum_{k=0}^\infty B_k
z^{2k}]$, where the series in square brackets sum up to 
entire analytic operator functions and the coefficients 
$A_k, B_k$ are Hilbert-Schmidt operators. The operator $k(0)$ is compact and in
the vicinity of $z=0$ the operator $k(z)$ is compact as well. 
Because the interaction is tuned to the p-wave zero energy resonance from the BS
principle (see Theorem 9 in \cite{10}) we infer that $\|k(0)\| = 1$. 
By standard results in quantum mechanics the ground state of $h(\lambda)$ for
$\lambda >1$ is doubly degenerate with the angular momentum 
$l=\pm 1$. By the BS principle \cite{10} it follows immediately that $\|k(0)\| =
1$ is an eigenvalue of $k(0)$ with multiplicity 2. 
Due to spherical symmetry the largest eigenvalue of $k(z)$ for $z>0$ is also
doubly degenerate. By the analysis in \cite{5} in the vicinity of $z = 0$ one
has 
\begin{equation}\label{13}
k(z) \varphi_\pm (z) = \mu(z) \varphi_\pm (z) , 
\end{equation}
where $z \geq 0$, $\mu(z)= \sup \sigma (k(z))$,  
$\|\varphi_\pm (z)\| =1$, $\mu(0)=1$. 
The orthogonal eigenvectors $\varphi_\pm (z)$ are defined for all $z\geq 0$ and
are eigenfunctions of the angular momentum operator with the eigenvalues $l = \pm 1$
respectively.  
By standard results in perturbation theory we have 
\begin{equation}\label{14}
 \varphi_\pm (z) = \eta_\pm + \mathcal{O}(z^2\ln z), 
\end{equation}
where $\eta_\pm \equiv \varphi_\pm (0)$. Due to the spherical symmetry of the
potential $\eta_\pm (x) = \eta_0 (|x|) e^{\pm i \varphi_x}$, where 
$|x|, \varphi_x$ are polar coordinates. By perturbation theory \cite{5} $\mu(z)$
has a convergent expansion in the vicinity of $z=0$ given by the series 
$\mu(z) = \sum_{n,m \geq 0}^\infty c_{nm} (z^2 \ln z)^n z^{2m}$, where $c_{nm}
\in \mathbb{R}$.  The leading terms of perturbation series are given by 
the expression 
\begin{equation}\label{15}
 \mu(z) = 1 + \frac {\pi}2 c_0^2 z^2 \ln z + \mathcal{O}(z^2), 
\end{equation}
where 
\begin{gather}
 c_0^2 = - \frac 1{4\pi^2} \int \int |v (|x|)|^\frac 12 |v(|y|)|^\frac 12 
|x-y|^2 \eta_+^* (x) \eta_+ (y) d^2 x d^2 y \nonumber\\
= \left[\int_0^\infty s^2 \eta_0 (s) |v (s)|^\frac 12 \right]^2 \label{17}
\end{gather}
(see also the text below eq. (7.13) in \cite{5}). Note that due to $k(z) >
k(z')$ for $z' > z >0$  the function $\mu (z)$ is monotone decreasing on $[0,
\infty)$ and 
\begin{equation}
\sup_{z> 0} \left\| \bigl[ 1 - \varphi_+ (z) (\varphi_+ (z) , \cdot) - \varphi_-
(z) (\varphi_- (z) , \cdot) \bigr] k(z)\right\| <  1 . \label{09.81}
\end{equation}

Now we consider the 3-body problem. We denote $v_\alpha := v(|r_\beta -
r_\gamma|)$, where $(\alpha , \beta , \gamma)$ is any permutation of the numbers
$(1,2,3)$. 
Let us introduce the linear subspace $\mathcal{H}_A \subset L^2 (\mathbb{R}^4)
\oplus L^2 (\mathbb{R}^4)  \oplus L^2 (\mathbb{R}^4) $, where each vector 
$(\phi_1, \phi_2, \phi_3) \in \mathcal{H}_A$ satisfies the antisymmetry
requirements listed in Table~\ref{tab:1}. Each operator $p_{ik}$ in
Table~\ref{tab:1} permutes 
spatial coordinates of particles $i,k$. 
Let us consider the operator $M(z)$ on $\mathcal{H}_A$, whose matrix entries are
the following 
operators 
\begin{equation}\label{09.100}
 M_{\alpha \beta } (z) := |v_\alpha|^{\frac 12} (H_0 + z^2)^{-1} |v_\beta|^{\frac
12} . 
\end{equation}
For each set of Jacobi coordinates in Fig.~\ref{fig:1} we introduce the Fourier 
transform $\mathcal{F}_k$, which acts on $f(x_k, y_k) \in L^1 (\mathbb{R}^4)$ as
follows 
\begin{equation}
 \hat f(p_k, q_k) := \frac 1{(2\pi)^2} \int d^2 x_k d^2y_k e^{-i(x_k \cdot p_k +
y_k \cdot q_k)} f(x_k, y_k) . 
\end{equation}
For any interval $\Omega \subset \mathbb{R}$ let us define the cutoff operator
on $\mathcal{H}_A$ 
\begin{equation}\label{cutoff25}
 \mathfrak{X}_{\Omega} = \diag \left\{\mathcal{F}_1^{-1} \chi_{\Omega} (|q_1|)
\mathcal{F}_1, \mathcal{F}_2^{-1} \chi_{\Omega} (|q_2|) \mathcal{F}_2,
\mathcal{F}_3^{-1} \chi_{\Omega} (|q_3|) \mathcal{F}_3\right\} . 
\end{equation}
We separate the diagonal part of $M(z)$ by writing $M(z) = M' (z) + M_d (z)$,
where 
$M_d (z) := \diag \{M_{11} (z), M_{22} (z), M_{33} (z)\}$ and $M' (z) = M (z) -
M_d (z)$. The operator $M' (z)$ is compact for all $z>0$. Indeed, we can write 
\begin{gather}
 M' (z) = \mathfrak{X}_{[0, R]}  M' (z) \mathfrak{X}_{[0, R]} +
\mathfrak{X}_{(R, \infty)}  M' (z) \mathfrak{X}_{[0, R]} \nonumber\\
+ \mathfrak{X}_{[0, R]}  M' (z) \mathfrak{X}_{(R, \infty)} + \mathfrak{X}_{(R,
\infty)}  M' (z) \mathfrak{X}_{(R, \infty)} . \label{21}
\end{gather}
Since the interactions are bounded it is easy to see that the norm of each of
the last three terms is $\hbox{o}(R)$ for $R \to \infty$. Hence, 
the compactness of the lhs of (\ref{21}) follows 
from the compactness of the first term on the rhs for all $R >0$. We prove its
compactness by proving the same for each of its matrix entries considered as
operators on $L^2 (\mathbb{R}^4)$. 
The operator $\mathcal{F}_1 \mathfrak{X}_{[0, R]}   M_{12}(z)\mathfrak{X}_{[0,
R]} \mathcal{F}_2^{-1}$ has the kernel 
\begin{equation}\label{22}
 {\hat M}_{12} (p_1, q_1; p_1', q_1') = \frac 1{\pi^2} \chi_{[0,R]}
(|q_1|)\frac{\widehat{|v|^{\frac 12 }} \Bigl(p_1 + \frac{2}{\sqrt 3} q_1' +
\frac{1}{\sqrt 3} q_1\Bigr) \widehat{|v|^{\frac 12 }} \Bigl(\frac{1}{\sqrt 3} q_1'
+ \frac{2}{\sqrt 3}  q_1 - p_1'\Bigr)}{(2q_1' + q_1)^2 + 3q_1^2  + 3 z^2}\chi_{[0,R]}
(|q_1'|) , 
\end{equation}
where $\widehat{|v|^{\frac 12 }} : \mathbb{R}^2 \to \mathbb{C}$ is the Fourier
transform of $|v(|x|)|^{\frac 12 }$. It is elementary to check that the
Hilbert-Schmidt norm of the operator in (\ref{22}) is finite. 
First, we prove 
\begin{lemma}\label{lem:1}
 The following equation holds $N_z (H)= \mathfrak{n} (M(z), 1)$. 
\end{lemma}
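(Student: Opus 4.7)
The plan is to apply a matrix-valued (Faddeev-type) Birman--Schwinger principle. I would set up a linear bijection between the eigenspace of $H$ at $-z^2$ in $L^2_A(\mathbb{R}^4)$ and the $1$-eigenspace of $M(z)$ in $\mathcal{H}_A$, then invoke the monotonicity of $z\mapsto M(z)$ together with the standard Birman--Schwinger sweep to identify $N_z(H)$ with $\mathfrak{n}(M(z),1)$.

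For the forward direction, suppose $\psi\in D(H)$ satisfies $H\psi=-z^2\psi$. Using $v_\alpha=-|v_\alpha|$, rewrite the eigenvalue equation as $\psi=(H_0+z^2)^{-1}\sum_\beta|v_\beta|^{1/2}\cdot|v_\beta|^{1/2}\psi$. Setting $\Phi_\alpha:=|v_\alpha|^{1/2}\psi$ and applying $|v_\alpha|^{1/2}$ yields $\Phi_\alpha=\sum_\beta M_{\alpha\beta}(z)\Phi_\beta$. The antisymmetry $p\psi=(-1)^{\pi(p)}\psi$, combined with the orthogonal transformations (\ref{6}) between the three Jacobi frames, translates directly into the permutation relations defining $\mathcal{H}_A$ (Table 1), so $\vec\Phi\in\mathcal{H}_A$. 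This map is injective: if $\vec\Phi=0$ then $(H_0+z^2)\psi=\sum_\alpha|v_\alpha|\psi=\sum_\alpha|v_\alpha|^{1/2}\Phi_\alpha=0$, forcing $\psi=0$.

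For the backward direction, given $\vec\Phi\in\mathcal{H}_A$ with $M(z)\vec\Phi=\vec\Phi$, define $\psi:=(H_0+z^2)^{-1}\sum_\alpha|v_\alpha|^{1/2}\Phi_\alpha$. A direct computation gives $|v_\beta|^{1/2}\psi=\sum_\alpha M_{\beta\alpha}(z)\Phi_\alpha=\Phi_\beta$, so $\psi\neq 0$ whenever $\vec\Phi\neq 0$, and $(H_0+z^2)\psi=\sum_\alpha|v_\alpha|\psi$, i.e., $H\psi=-z^2\psi$. The permutation rules built into $\mathcal{H}_A$, together with (\ref{6}), force $\psi\in L^2_A$. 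Composing the two directions returns the identity in either order, so the correspondence is a genuine linear isomorphism between the two eigenspaces. To finish, observe that $M(z)$ is self-adjoint and positive ($M_{\alpha\beta}^*=M_{\beta\alpha}$), strictly decreasing in $z$, and has essential spectrum contained in $[0,\mu(z)]\subset[0,1)$: $M'(z)$ is compact by the argument around (\ref{21}), while the diagonal entries admit the direct-integral representation $M_{\alpha\alpha}(z)\simeq\int^{\oplus}k(\sqrt{q^2+z^2})\,dq$ whose spectrum lies in $[0,\mu(z)]$ by strict monotonicity of $\mu$ and $\mu(0)=1$. Hence $\mathfrak{n}(M(z),1)<\infty$ for every $z>0$, and the standard Birman--Schwinger sweep gives $N_z(H)=\mathfrak{n}(M(z),1)$: each bound state with energy below $-z^2$ corresponds, via the bijection, to some $z'>z$ with $1\in\sigma(M(z'))$, and by monotonicity the total multiplicity of such crossings equals the number of eigenvalues of $M(z)$ above $1$.

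The main obstacle I anticipate is the bookkeeping of antisymmetry: one must verify that the permutation conditions encoded in Table 1, which couple the three Jacobi components through the orthogonal matrix in (\ref{6}), are \emph{exactly} the constraints induced on $\vec\Phi=(|v_1|^{1/2}\psi,|v_2|^{1/2}\psi,|v_3|^{1/2}\psi)$ by the action of $S_3$ on $\psi$, with no loss or overcounting in either direction. Once this algebraic translation is carried out correctly, the analytic content of the lemma reduces to the two classical ingredients above: the compactness of $M'(z)$ plus the fact that the p-wave resonance condition $\mu(0)=1$ just barely keeps the essential spectrum of $M_d(z)$ below $1$ for every $z>0$.
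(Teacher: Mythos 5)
Your proof is correct, but it follows a genuinely different path from the one in the paper. The paper does not set up a bijection with $\ker(H+z^2)$ and then run the monotone Birman--Schwinger sweep. Instead, it first introduces the symmetric Birman--Schwinger operator $\mathfrak{m}(z)=\sum_\alpha (H_0+z^2)^{-1/2}|v_\alpha|(H_0+z^2)^{-1/2}$ on $L^2_A(\mathbb{R}^4)$ and cites Theorem~9 of \cite{10} for the identity $N_z(H)=\mathfrak{n}(\mathfrak{m}(z),1)$, so the entire sweep is outsourced. The remaining work is then to intertwine eigenspaces of $\mathfrak{m}(z)$ and $M(z)$ \emph{at every fixed $\lambda>1$}: the operators $B_\lambda\psi=(|v_\alpha|^{1/2}(H_0+z^2)^{-1/2}\psi)_\alpha$ and $B'_\lambda\phi=(H_0+z^2)^{-1/2}\sum_\beta|v_\beta|^{1/2}\phi_\beta$ satisfy $\lambda^{-1}B'_\lambda B_\lambda=1$ on $L_\lambda$ and $\lambda^{-1}B_\lambda B'_\lambda=1$ on $\mathcal{H}_\lambda$, so $\dim L_\lambda=\dim\mathcal{H}_\lambda$, and hence $\mathfrak{n}(\mathfrak{m}(z),1)=\mathfrak{n}(M(z),1)$. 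You instead build the bijection only at $\lambda=1$ (between $\ker(H+z^2)$ and $\ker(M(z)-1)$, verifying the antisymmetry bookkeeping from Table~1 in both directions) and then re-derive the sweep yourself, using strict monotonicity of $z\mapsto M(z)$, norm continuity, $\|M(z)\|\to 0$, and the location of $\sigma_{\mathrm{ess}}(M(z))$ below $1$. Your version is more self-contained in that it does not lean on a pre-proved Birman--Schwinger theorem, at the cost of having to restate (and implicitly re-prove) the sweep; the paper's version is shorter because the $\lambda$-by-$\lambda$ intertwining makes all the sweeping machinery a citation. One small thing you invoked but did not justify is $\|M(z')\|\to 0$ as $z'\to\infty$; it follows immediately from $\|(H_0+z'^2)^{-1}\|\to 0$ and the boundedness of $|v_\alpha|^{1/2}$, so it is not a gap, just a sentence worth adding.
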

Similar lemma has been proved in \cite{3}, however, we need to give a new proof
in view of antisymmetry restrictions. 
\begin{proof}[Proof of Lemma \ref{lem:1}]
Consider the operator 
\begin{equation}
 \mathfrak{m} (z) = \sum_\alpha \bigl( H_0 + z^2 \bigr)^{-\frac 12} |v_\alpha |
\bigl( H_0 + z^2 \bigr)^{-\frac 12}
\end{equation}
 on the space $L^2_A (\mathbb{R}^4)$. 
By the BS principle \cite{10} $ N_z (H) = \mathfrak{n} \left(\mathfrak{m} (z),
1\right)$. Let $L_\lambda$ and $\mathcal{H}_\lambda$ denote 
the eigenspaces of the operators $ \mathfrak{m} (z) $ and $   M(z)$
respectively, which correspond to the eigenvalue $\lambda > 1$. Let us first
show that 
the dimension of both eigenspaces is finite. Note that by Theorem 9 in \cite{10}
$\sigma_{ess} (\mathfrak{m} (z)) \subset (-\infty, 1]$, hence, $\textnormal{dim}
L_\lambda$ is finite. 
Due to compactness of $M'(z)$ we have 
\begin{equation}
 \sigma_{ess}(M(z)) = \sigma_{ess}(M_d (z)) \subseteq [0, \sup_\alpha
\left\||v_\alpha|^{\frac 12} (H_0 +z^2)^{-1}|v_\alpha|^{\frac
12}\right\|]\subset [0,1]
\end{equation}
and 
thus $\textnormal{dim} \mathcal{H}_\lambda$ is also finite. The operator
$B_\lambda :L_\lambda \to \mathcal{H}_\lambda$ is defined as 
$(B_\lambda \psi )_\alpha =  |v_\alpha|^{1/2} (H_0 + z^2)^{-1/2} \psi $. It is
easy to check that this operator is defined correctly, and by applying this
operator 
we infer that from $\textnormal{dim} L_\lambda \neq 0$ it follows that
$\textnormal{dim} \mathcal{H}_\lambda \neq 0$. Similarly, we  
define the operator $B'_\lambda : \mathcal{H}_\lambda\to L_\lambda$ given by 
$B'_\lambda \phi =  (H_0 + z^2)^{-1/2} \sum_\beta |v_\beta|^{1/2} \phi_\beta$,
which is also well-defined. Applying 
this operator we find that $\textnormal{dim} \mathcal{H}_\lambda \neq 0
\Longleftrightarrow \textnormal{dim} L_\lambda \neq 0$.  
Since $\lambda^{-1} B'_\lambda B_\lambda = 1$ we get that 
$\textnormal{dim} L_\lambda = \textnormal{dim} \mathcal{H}_\lambda$. Therefore, 
$ N_z (H) = \mathfrak{n} \left(\mathfrak{m} (z), 1\right) = \mathfrak{n} \left(M
(z), 1\right)$. 
\end{proof}
By Lemma~\ref{lem:1} and the BS principle \cite{10} 
\begin{equation}\label{25}
 N_z (H) = \mathfrak{n} \left( \mathcal{A} (z), 1\right) , 
\end{equation}
where 
\begin{equation}
 \mathcal{A} (z) = \bigl(1 - M_d (z)\bigr)^{-\frac 12}M' (z) \bigl(1 - M_d
(z)\bigr)^{-\frac 12} . 
\end{equation}

For $k=1,2,3$ let us introduce the projection operators $P_\pm^{(k)}$ on
$\mathcal{H}_A$, which act on $f(p_k, q_k)$ as follows 
\begin{equation}\label{48}
  [P_\pm^{(k)}  f ] (p_k , q_k) = \hat \eta_\pm (p_k) \int \hat \eta_\pm^*
(p'_k) f (p'_k , q_k) dp'_k , 
\end{equation}
and 
\begin{equation}
 P_\pm  = \diag \left\{\mathcal{F}_1^{-1} P_\pm^{(1)} \mathcal{F}_1,
\mathcal{F}_2^{-1} P_\pm^{(2)} \mathcal{F}_2, \mathcal{F}_3^{-1} P_\pm^{(3)}
\mathcal{F}_3\right\} . \label{09:15}
\end{equation}
In (\ref{48}) $\hat \eta_\pm$ are Fourier transformed functions $\eta_\pm$
defined in (\ref{14}). Let us fix the cut off parameter $r_\varepsilon \in (0, 1/4)$ and
define 
\begin{equation}
 \mathcal{A}_0 (z)  =   \bigl[P_+  + P_- \bigr] G (z)M' (z) G (z) \bigl[P_+  +
P_- \bigr] , 
\end{equation}
where 
\begin{gather}
 G (z) = \diag \left\{ \mathcal{F}_1^{-1}g_z (|q_1 |)\mathcal{F}_1,
\mathcal{F}_2^{-1}g_z (|q_2 |)\mathcal{F}_2 , \mathcal{F}_3^{-1}g_z (|q_3
|)\mathcal{F}_3 \right\} , \label{9.61}
\end{gather}
and $g_z : \mathbb{R}_+ \to \mathbb{R}_+$ is defined through 
\begin{equation} 
g_z (s) := 
\begin{cases}
 \left(1 - \mu \Bigl( \sqrt{s^2 +z^2}\Bigr)\right)^{-1/2} & \textnormal{if $s
\leq r_\varepsilon$}, \\
0& \textnormal{if $s > r_\varepsilon$} . 
\end{cases}
\end{equation}
By (\ref{15}) there exist $\delta, \delta' > 0$ such
that 
\begin{equation}\label{34}
 \frac{\delta'}{(s^2 +z^2)|\ln (s^2 +z^2)|} \leq g_z^2 (s) \leq
\frac{\delta}{(s^2 +z^2)|\ln (s^2 +z^2)|} 
\end{equation}
for $z, s \in (0, r_\varepsilon ]$. Besides for $r_\varepsilon \to 0$ we
have $\delta = 4/(\pi c_0^2) + \hbox{o}(r_\varepsilon)$ and $\delta' = 4/(\pi c_0^2) +
\hbox{o}(r_\varepsilon)$. 
We shall always implicitly assume that $z \in (0,r_\varepsilon]$ if not stated
otherwise. We decompose the operator $\mathcal{A}(z)$ into a sum 
of the main term $\mathcal{A}_0 (z)$ and the remainder 
\begin{equation}\label{09.5}
 \mathcal{A} (z) = \mathcal{A}_0 (z) + \mathcal{R} (z) . 
\end{equation}
The  operators on the rhs of (\ref{09.5}) are self-adjoint and compact. The proof of
Theorem~\ref{th:1} is based on the following two theorems  
\begin{theorem}\label{th:09.2}
 For all $r_\varepsilon \in (0, 1/4)$ and $a >0 $ 
\begin{equation}
 \lim_{z \to 0} |\ln|\ln z^2 | |^{-1} \mathfrak{n}(\mathcal{A}_0 (z), a) = \frac
8{3\pi a} . 
\end{equation}
\end{theorem}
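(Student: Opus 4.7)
My plan is to reduce $\mathcal{A}_0(z)$ to a matrix-valued convolution operator on an interval whose length grows like $\ln|\ln z^2|$, and then apply a one-dimensional Birman-type counting formula. The projection $P_+ + P_-$ has rank two in each Jacobi slot (through the $p_k$-variable), and the diagonal blocks of $\mathcal{A}_0(z)$ vanish because $\mathcal{A}_0$ contains $M'$ (not $M$), so $\mathcal{A}_0(z)$ is unitarily equivalent to a $6\times 6$ matrix (indexed by $\alpha\in\{1,2,3\}$ and a sign $\pm$) of integral operators on $L^2(B_{r_\varepsilon}, d^2q)$ with off-diagonal entries
\begin{equation*}
T_{\alpha\beta}^{\pm\pm'}(q,q') \;=\; g_z(|q|)\,\bigl\langle \hat\eta_{\pm},\; \hat M_{\alpha\beta}(\cdot, q;\cdot, q')\,\hat\eta_{\pm'}\bigr\rangle\; g_z(|q'|),
\end{equation*}
where the matrix element is read off from (\ref{22}) after use of the orthogonal transformation (\ref{6}).

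First I would separate angular variables in $q$. Rotational invariance of $(H_0+z^2)^{-1}$ together with the angular momentum $\pm 1$ of $\hat\eta_{\pm}$ forces each $T_{\alpha\beta}^{\pm\pm'}$ to couple a single angular component of $q$ to a single matching component of $q'$; the orthogonal angular modes form an invariant subspace whose norm is bounded below $1$ via (\ref{09.81}) and which therefore contributes negligibly to $\mathfrak{n}(\mathcal{A}_0(z), a)$. Next, in the dominant small-momentum region one drops $3z^2$ from the denominator in (\ref{22}) and replaces $\widehat{|v|^{1/2}}$ by its value at zero plus a Taylor remainder, so that the radial part of the kernel becomes a scale-invariant function of $\rho/\rho'$ multiplied by an explicit trigonometric prefactor inherited from (\ref{6}).

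The central substitution is $v := \ln|\ln(\rho^2 + z^2)|$. By (\ref{34}) together with the stated asymptotics $\delta,\delta' = 4/(\pi c_0^2) + \hbox{o}(r_\varepsilon)$, the measure $g_z^2(\rho)\,\rho\,d\rho$ pulls back to $\frac{2}{\pi c_0^2}(1+o(1))\,dv$ on an interval $I_z\subset\mathbb{R}$ of length $|I_z|=\ln|\ln z^2|+O(1)$ as $z\to 0$. Under this change of variable, the scale-invariant kernel from the previous step becomes translation-invariant in the dominant regime $|v-v'|=O(1)$, and diagonalizing the resulting $6\times 6$ convolution matrix using the $S_3\times\mathbb{Z}_2$ symmetry of the labels $(\alpha,\pm)$ reduces the problem to a direct sum of scalar convolutions with Mellin symbols $\lambda_j(\xi)$. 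The classical Birman eigenvalue-counting formula for convolution operators on a long interval then yields
\begin{equation*}
\mathfrak{n}(\mathcal{A}_0(z), a) \;=\; \frac{|I_z|}{2\pi}\sum_j \mu\bigl\{\xi\in\mathbb{R}:\lambda_j(\xi)>a\bigr\} \;+\; \hbox{o}(|I_z|),
\end{equation*}
and an explicit evaluation of the $\lambda_j$ (matching the transcendental equation of \cite{4}) should give $\sum_j\mu\{\lambda_j(\xi)>a\}=16/(3a)$, producing the claimed prefactor $8/(3\pi a)$.

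The hardest step will be the third: showing that the error from replacing each $T_{\alpha\beta}^{\pm\pm'}$ by its translation-invariant model contributes only $\hbox{o}(\ln|\ln z^2|)$ to the eigenvalue count. The weight $g_z$ is singular at the origin, the norm of $\mathcal{A}_0(z)$ diverges as $z\to 0$, and the scale-invariant approximation degenerates both near the cutoff $r_\varepsilon$ and in the boundary layer $\rho\sim z$. Crude Hilbert-Schmidt estimates are insufficient; one must combine quantitative Ky Fan-type inequalities for $\mathfrak{n}_\mu$ of sums of compact operators with Birman-Solomyak bounds that carry the logarithmic weight, and probably choose $r_\varepsilon$ to depend on $z$ (going slowly to zero) in order to absorb the error without changing the leading coefficient.
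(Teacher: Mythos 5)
The central step of your proposal---the claim that after the substitution $v=\ln|\ln(\rho^2+z^2)|$ the scale-invariant kernel becomes translation-invariant, so that a Birman/Mellin convolution counting formula applies---does not hold, and this is a genuine gap. Under $u=|\ln(\rho^2+z^2)|$ the scale-invariant part of the kernel does become a difference kernel $K(u-u')$, but the double-log substitution $v=\ln u$ then sends a fixed window $|u-u'|=O(1)$ to a window $|v-v'|=O(e^{-v})$ that shrinks to zero. Consequently, in the $v$-variable the kernel does not converge to any nonzero translation-invariant function; it concentrates on the diagonal. What actually survives, after the equivalence reductions, is the Volterra-type operator of indefinite integration: in the paper the effective operator $\mathcal{T}(z)$ has kernels $\chi_{\{s\le t\}}g_z(s)g_z(t)$ and $\chi_{\{s\ge t\}}g_z(s)g_z(t)$, which after the substitution $x=\xi(s)=\int_0^s g_z^2(t)\,t\,dt$ become exactly $\int_x^{L}$ and $\int_0^{x}$ on $[0,L]$ with $L=\xi(r_\varepsilon)$. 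Such an operator has no bounded, decaying Fourier symbol (the would-be symbol of $\chi_{[0,\infty)}$ is a distribution involving $\delta$ and $\mathrm{pv}\,1/\xi$), so the counting formula $\mathfrak{n}\approx (|I_z|/2\pi)\sum_j\mu\{\lambda_j(\xi)>a\}$ you invoke is not applicable. The paper instead diagonalizes the $2\times 2$ Volterra system directly by differentiation, obtaining the explicit spectrum $\lambda_k=\pm\,\xi(r_\varepsilon)/(\pi/2+\pi(k-1))$, which is the signature $\lambda_k\propto L/k$ of a Volterra operator rather than of a convolution; the $8/(3\pi a)$ coefficient then falls out of the asymptotic $\lim_{z\to 0}|\ln|\ln z^2||^{-1}\xi(r_\varepsilon)=2/(\pi c_0^2)$.

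Two secondary remarks. First, the paper needs no $6\times 6$ matrix nor $S_3\times\mathbb{Z}_2$ reduction: the antisymmetry relations of Table~\ref{tab:1} immediately reduce $\mathcal{A}_0(z)$ to a single $2\times 2$ system $L^{(a)}(z)$ on $L^2(\mathbb{R}^2)\oplus L^2(\mathbb{R}^2)$, and the subsequent angular Fourier expansion (\ref{74})--(\ref{75}) shows that only the modes $l=\pm 1$ contribute, yielding $T^+(z)\oplus T^-(z)$ on $(L^2([0,r_\varepsilon];x\,dx))^2\oplus(L^2([0,r_\varepsilon];x\,dx))^2$. Second, the small-momentum expansion must be carried out for $\hat\psi_\pm$ (which vanishes linearly at $p=0$), not for $\widehat{|v|^{1/2}}$ (whose value at zero is a nonzero constant); it is precisely the linear vanishing (\ref{58})--(\ref{60}) coming from the p-wave antisymmetry that makes the factor $s^{-2}\psi_0^*(s/\sqrt3)\psi_0(2s/\sqrt3)\to c_0^2/6$ finite and produces the characteristic function kernels $\chi_{\{s\le t\}}$, $\chi_{\{s\ge t\}}$ that drive the double-logarithmic count.
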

\begin{theorem}\label{th:09.3}
 For each $\varepsilon > 0$ one can choose $r_\varepsilon \in (0, 1/4)$ so that 
\begin{equation}
 \varlimsup_{z \to 0} |\ln|\ln z^2 | |^{-1} \mathfrak{n}_\mu (\mathcal{R} (z),
\varepsilon) < \varepsilon. 
\end{equation}
\end{theorem}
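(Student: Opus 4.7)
\textit{Plan.} The strategy is to further decompose $(1-M_d(z))^{-1/2}$ spectrally and then control the singular-value counts of the resulting pieces of $\mathcal{R}(z)$ via the Ky Fan inequality $\mathfrak{n}_\mu(A+B,s+t)\le\mathfrak{n}_\mu(A,s)+\mathfrak{n}_\mu(B,t)$, so that each piece contributes either a bounded count (which vanishes against the normalization $|\ln|\ln z^2||^{-1}$) or a count proportional to $|\ln|\ln z^2||$ with a coefficient that can be made arbitrarily small by choosing $r_\varepsilon$ small.

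Fourier-transforming in the spectator momentum $y_\alpha$, each diagonal entry $M_{\alpha\alpha}(z)$ decomposes as the fibred family $k(\sqrt{q_\alpha^2+z^2})$ on $L^2_A(\mathbb{R}^2_x)$. Using the spectral decomposition (\ref{13}), the uniform bound (\ref{09.81}) on the orthogonal complement of the top eigenspace, and the eigenfunction expansion (\ref{14}), I would write
$$(1-M_d(z))^{-1/2} \;=\; G(z)\Pi_0 \;+\; \mathcal{S}_{\mathrm{sm}}(z) \;+\; \mathcal{S}_{\mathrm{bd}}(z),$$
with $\Pi_0 := P_++P_-$ from (\ref{09:15}). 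Here $\mathcal{S}_{\mathrm{sm}}(z)$ gathers the $q$-dependent correction $G(z)(\Pi^{q,z}-\Pi_0)$ on the top two-eigenspace restricted to $|q_\alpha|\le r_\varepsilon$; combining (\ref{14}) with the upper bound in (\ref{34}) gives $\|\mathcal{S}_{\mathrm{sm}}(z)\| = O\bigl(\sqrt{r_\varepsilon^2|\ln r_\varepsilon|}\bigr)$, which tends to $0$ as $r_\varepsilon\to 0$. The operator $\mathcal{S}_{\mathrm{bd}}(z)$ collects the top-eigenspace part for $|q_\alpha|>r_\varepsilon$ (where $(1-\mu)^{-1/2}$ is bounded since $\mu(r_\varepsilon)<1$) and the orthogonal-complement part (bounded uniformly by (\ref{09.81})), so $\|\mathcal{S}_{\mathrm{bd}}(z)\|$ is bounded uniformly in $z$ and $r_\varepsilon$.

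Using $[G(z),\Pi_0]=0$, the expansion $\mathcal{R}(z)=\mathcal{A}(z)-\mathcal{A}_0(z)$ splits into eight cross terms in three classes. First, the pure terms $\mathcal{S}\,M'(z)\,\mathcal{S}$ involve the compact $M'(z)$ sandwiched between uniformly bounded operators that depend norm-continuously on $z$; as $z\to 0$ each converges in norm to a fixed compact limit, so $\mathfrak{n}_\mu(\cdot,\varepsilon)=O(1)$. Second, the mixed terms $G\Pi_0M'(z)\mathcal{S}_{\mathrm{sm}}$ and their adjoints admit a Hilbert-Schmidt bound: the key computation is $\|\Pi_0M'(z)G(z)\|_{HS}^2 = O(|\ln|\ln z^2||)$, where the divergent contribution $\int g_z^2(|q|)\,d^2q \sim |\ln|\ln z^2||$ is balanced by the kernel of $\Pi_0M'$, which is integrable thanks to the $p$-wave cancellation $\int\eta_\pm^*(x)|v(x)|^{1/2}dx = 0$ forced by the angular form $\eta_\pm(x)=\eta_0(|x|)e^{\pm i\varphi_x}$. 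Together with $\|\mathcal{S}_{\mathrm{sm}}\|^2 = O(r_\varepsilon^2|\ln r_\varepsilon|)$ this gives $\mathfrak{n}_\mu(G\Pi_0M'\mathcal{S}_{\mathrm{sm}},\varepsilon) \le (C/\varepsilon^2)\,r_\varepsilon^2|\ln r_\varepsilon|\cdot|\ln|\ln z^2||$, which is $<\varepsilon|\ln|\ln z^2||/8$ for sufficiently small $r_\varepsilon$.

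The hard part, which I expect to be the main obstacle, is the third class $G\Pi_0M'(z)\mathcal{S}_{\mathrm{bd}}$ and its adjoint. Here $\|G\|$ diverges like $(z^2|\ln z^2|)^{-1/2}$ while $\|\mathcal{S}_{\mathrm{bd}}\|$ cannot be made small by shrinking $r_\varepsilon$, so a pure norm estimate is hopeless; nor does the Hilbert-Schmidt bound of the previous paragraph give anything better than $O(|\ln|\ln z^2||)$ since there is no small prefactor. The plan is to exploit the structure of $\mathcal{S}_{\mathrm{bd}}$ itself. Its high-$|q|$ component has diagonal support disjoint from that of $G$, so the composition must route through the off-diagonal $M'(z)$, whose kernel (\ref{22}) acquires additional smallness from both its quadratic-in-momentum denominator and the rapid Fourier decay of $\widehat{|v|^{1/2}}$ (a consequence of the exponential decay of $v$). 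Its orthogonal-complement component is nearly orthogonal to the range of $\Pi_0$ for small $(q^2+z^2)$, with defect $O\bigl((q^2+z^2)|\ln(q^2+z^2)|\bigr)$ from (\ref{14}), producing precisely the factor needed to absorb the divergence of $G(z)$. Careful kernel estimates in both subcases should give norm-convergence of $G\Pi_0M'(z)\mathcal{S}_{\mathrm{bd}}$ to a compact operator as $z\to 0$, so that $\mathfrak{n}_\mu(\cdot,\varepsilon)=O(1)$. Assembling everything via Ky Fan then yields $\varlimsup_{z\to 0}|\ln|\ln z^2||^{-1}\mathfrak{n}_\mu(\mathcal{R}(z),\varepsilon) < \varepsilon$ for sufficiently small $r_\varepsilon$. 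The essential analytic difficulty, and what distinguishes the two-dimensional problem from the three-dimensional Efimov analysis of \cite{2}, is exactly this need to tame the divergence of $G(z)$ by exploiting the $p$-wave angular structure of $\eta_\pm$ together with the off-diagonal coupling in $M'(z)$.
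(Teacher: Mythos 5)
Your decomposition of $(1-M_d(z))^{-1/2}$ into $G(z)\Pi_0 + \mathcal{S}_{\mathrm{sm}}(z) + \mathcal{S}_{\mathrm{bd}}(z)$ is, up to reorganization, equivalent to the paper's split into the frozen two-dimensional eigenspace (yielding $\mathcal{A}_0$), the projection correction ($\mathcal{R}_0$), the orthogonal complement ($\mathcal{R}_1, \mathcal{R}_2$), and the high-$|q|$ cutoff ($\mathcal{R}_3, \mathcal{R}_4$). Your treatment of the $\mathcal{S}_{\mathrm{sm}}$-terms via a Hilbert-Schmidt bound with an $r_\varepsilon$-small prefactor is sound and is exactly what the paper does (Lemma~\ref{lem:2}, estimate~(\ref{09.02}), combined with~(\ref{09.74})). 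The gap lies entirely in the class you yourself flag as hard, $G\Pi_0 M'(z)\mathcal{S}_{\mathrm{bd}}(z)$, and there your proposed resolution fails in both subcases.

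For the orthogonal-complement piece of $\mathcal{S}_{\mathrm{bd}}$ (the analogue of $\mathcal{R}_1$), your argument invokes near-orthogonality of $\mathbb{Q}(z)$ to the range of $\Pi_0$, with defect $O((q^2+z^2)|\ln(q^2+z^2)|)$ absorbing the divergence of $g_z$. But in the product $G\Pi_0 M'(z)\mathcal{S}_\perp$ the two projections sit on opposite sides of the \emph{off-diagonal} operator $M'(z)$, hence act in different Jacobi coordinate systems (on $q_\alpha$ and $q_\beta$ respectively with $\alpha\ne\beta$); there is no fiber in which $\Pi_0$ meets $\mathbb{Q}(z)$ and the cancellation you describe simply does not occur. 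The correct mechanism is different: the paper bounds $\mathfrak{n}_\mu(\mathcal{R}_1(z),\varepsilon)$ using~(\ref{traceid}) and the Hilbert-Schmidt estimate~(\ref{09.02}), obtaining a count that \emph{does grow} like $r_\varepsilon^2|\ln|\ln z^2||$ and is made small only by shrinking $r_\varepsilon$ (not bounded for fixed $r_\varepsilon$). So your expectation that $\mathfrak{n}_\mu(\cdot,\varepsilon)=O(1)$ for this piece is wrong.

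For the high-$|q|$ piece of $\mathcal{S}_{\mathrm{bd}}$ (the analogue of $\mathcal{R}_3$), you assert that careful kernel estimates should give norm-convergence to a compact limit as $z\to 0$. This is precisely the step that does not hold: the factor $G(z)$ on the left has operator norm diverging like $(z^2|\ln z^2|)^{-1/2}$, and the Hilbert-Schmidt norm of $\mathfrak{X}_{[0,r_\varepsilon]}\mathbb{P}(z)\mathcal{A}(z)\mathfrak{X}_{(r_\varepsilon,R]}$ grows like $|\ln|\ln z^2||^{1/2}$; there is no reason for norm-convergence. The paper's actual proof of Lemma~\ref{newlem:6} for $i=3$ introduces a second cutoff parameter $R'$ in position space and exploits the trace-class bound~(\ref{traceidreq}): the short-range part $\mathcal{K}_1(z)$ satisfies $\|\mathcal{K}_1(z)\|_1 \lesssim c(R')^{1/2}|\ln|\ln z^2||^{1/2}$, so its eigenvalue count is $O(|\ln|\ln z^2||^{1/2}) = o(|\ln|\ln z^2||)$, while the long-range Hilbert-Schmidt remainder $\mathcal{K}_2(z)$ has prefactor $d(R')\to 0$ as $R'\to\infty$. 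This two-scale argument — degrading the HS estimate to a trace estimate on a bounded region so that the exponent $p=1$ rather than $p=2$ in~(\ref{traceid}) improves the $z$-power, and pushing the remainder to large $|x|$ where the potential's exponential decay makes it small — is the essential device missing from your proposal, and no amount of kernel-level cancellation replaces it.
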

Sec.~\ref{sec:3} is devoted to the proof of Theorem~\ref{th:09.2}. The proof of Theorem~\ref{th:09.3}, which is
rather involved and uses the machinery of trace ideals \cite{17}, is given in Sec.~\ref{sec:4}.
Using Theorems~\ref{th:09.2}, \ref{th:09.3} we 
can prove the main theorem 
\begin{proof}[Proof of Theorem~\ref{th:1}]
For any given $\varepsilon \in (0, 1)$  let us fix $r_\varepsilon >0$ according to
Theorem~\ref{th:09.3}. The eigenvalue distribution function satisfies the inequality \cite{birman09,3} 
\begin{equation}\label{01.09}
 \mathfrak{n} (A_1 + A_2 , a_1 + a_2) \leq \mathfrak{n}  (A_1 , a_1 ) +
\mathfrak{n}  (A_2 ,  a_2) ,  
\end{equation}
where $A_{1,2} \in \mathcal{C}(\mathcal{H})$ and $a_{1,2} >0$. Using this inequality we obtain from (\ref{09.5}) 
\begin{gather}
 \mathfrak{n} (\mathcal{A} (z), 1) \leq \mathfrak{n}(\mathcal{A}_0 (z), 1-
\varepsilon) + \mathfrak{n}(\mathcal{R} (z), \varepsilon) \leq
\mathfrak{n}(\mathcal{A}_0 (z), 1- \varepsilon) + \mathfrak{n}_\mu (\mathcal{R}
(z), \varepsilon)\\
\mathfrak{n} (\mathcal{A}_0 (z), 1+\varepsilon) \leq \mathfrak{n}(\mathcal{A}
(z), 1) + \mathfrak{n}(-\mathcal{R} (z), \varepsilon) \leq
\mathfrak{n}(\mathcal{A} (z), 1) + \mathfrak{n}_\mu (\mathcal{R} (z),
\varepsilon). 
\end{gather}
(Because $\mathcal{R} (z)$ is self-adjoint we have $\mathfrak{n}(\pm \mathcal{R}
(z), \varepsilon) \leq \mathfrak{n}_\mu (\mathcal{R} (z), \varepsilon)$).
Hence, 
using Theorems~\ref{th:09.2}, \ref{th:09.3}  we get 
\begin{gather}
 \frac 8{3\pi(1+\varepsilon)} - \varepsilon \leq \varliminf_{z\to 0} |\ln|\ln
z^2 | |^{-1} \mathfrak{n}(\mathcal{A} (z), 1) \nonumber\\
\leq \varlimsup_{z\to 0} |\ln|\ln z^2 | |^{-1} \mathfrak{n}(\mathcal{A} (z), 1)
\leq \frac 8{3\pi(1-\varepsilon)} + \varepsilon . 
\end{gather}
Letting $\varepsilon \to 0$ and using (\ref{25}) we complete the proof. 
\end{proof}

\section{Spectral asymptotic of the leading term}\label{sec:3}

\begin{definition}\label{def:1}
For an operator function $B: \mathbb{R}_+ /\{0\} \to \mathcal{C}(\mathcal{H})$
we shall write $B(z) = \mathcal{O}_{C} (z)$ if and only if for all $\epsilon
>0$ 
there exists $z_0 >0$ and a decomposition $B(z) = B_\epsilon (z) + P_\epsilon
(z)$, where $B_\epsilon , P_\epsilon: \mathbb{R}_+ /\{0\} \to
\mathcal{C}(\mathcal{H})$ are such that 
$\sup_{z \in (0,z_0)} \|B_\epsilon (z)\| < \epsilon$ and $\sup_{z \in (0, z_0)}
\dim \Ran P_\epsilon (z) < \infty$. 
\end{definition}
\begin{remark}
 If $B(z) = \mathcal{O}_{C} (z)$ in Def.~\ref{def:1} is such that $B(z) = B^*
(z)$ then in the decomposition one can choose $B_\epsilon , P_\epsilon$ so that 
$B_\epsilon (z) = B_\epsilon^* (z)$ and $P_\epsilon (z) = P_\epsilon^* (z)$.
(This can be verified by writing $B(z) = (1/2)[B_\epsilon (z) + B^*_\epsilon
(z)] 
+ (1/2)[P_\epsilon (z) + P^*_\epsilon (z)]$). 
\end{remark}
The following proposition is obvious 
\begin{proposition}\label{prop:1}
 Suppose that $B: \mathbb{R}_+ /\{0\} \to \mathcal{C}(\mathcal{H})$ is such that
$\sup_{z \in (0, z_0)} \|B (z) \|_{HS} < \infty$ for some $z_0 >0$. 
Then $B(z) = \mathcal{O}_{C} (z)$. 
\end{proposition}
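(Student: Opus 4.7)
The plan is to exploit the elementary fact that a uniform Hilbert--Schmidt bound forces the singular values of $B(z)$ to decay uniformly in $z$, and then to use the singular-value decomposition to split $B(z)$ into a uniformly finite-rank piece and a uniformly small operator-norm piece.

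Fix $M>0$ with $\sup_{z\in(0,z_0)}\|B(z)\|_{HS}\leq M$. Since the singular values $\mu_n(B(z))$ are ordered in descending order and $\sum_{n\geq 1}\mu_n(B(z))^2=\|B(z)\|_{HS}^2\leq M^2$, I would observe the pointwise estimate $\mu_n(B(z))\leq M/\sqrt{n}$, so that for any fixed $\varepsilon>0$ the number of singular values exceeding $\varepsilon$ is bounded, uniformly in $z\in(0,z_0)$, by $N_\varepsilon:=\lfloor M^2/\varepsilon^2\rfloor$.

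Next, given any $\varepsilon>0$, I would write the canonical (Schmidt) expansion $B(z)=\sum_{n\geq 1}\mu_n(B(z))\,\langle\phi_n(z),\cdot\rangle\,\psi_n(z)$, where $\{\phi_n(z)\}$ and $\{\psi_n(z)\}$ are the corresponding orthonormal systems. Define
\begin{equation}
P_\varepsilon(z):=\sum_{n:\mu_n(B(z))\geq\varepsilon}\mu_n(B(z))\,\langle\phi_n(z),\cdot\rangle\,\psi_n(z),\qquad B_\varepsilon(z):=B(z)-P_\varepsilon(z).
\end{equation}
By the previous step, $\dim\Ran P_\varepsilon(z)\leq N_\varepsilon$ uniformly in $z\in(0,z_0)$, while $B_\varepsilon(z)$ has all singular values strictly less than $\varepsilon$, so $\|B_\varepsilon(z)\|=\mu_1(B_\varepsilon(z))<\varepsilon$. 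Both $P_\varepsilon(z)$ and $B_\varepsilon(z)$ are compact (the former is finite rank, the latter is the norm-limit of finite-rank truncations of a Hilbert--Schmidt operator), which verifies all conditions of Definition \ref{def:1}.

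There is no real obstacle here beyond keeping track of the two bookkeeping conditions simultaneously; the whole content is the uniform pigeonhole estimate $\#\{n:\mu_n(B(z))\geq\varepsilon\}\leq\|B(z)\|_{HS}^2/\varepsilon^2$, which the Hilbert--Schmidt hypothesis makes uniform in $z$. The proposition will then be invoked repeatedly in Section \ref{sec:4} to dispose of remainder terms for which a Hilbert--Schmidt bound is available.
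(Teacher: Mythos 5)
Your proof is correct and takes essentially the same route as the paper: both arguments rest on the singular value decomposition and the pigeonhole bound $n\,\mu_n(B(z))^2\leq\|B(z)\|_{HS}^2$, which makes the rank of the "large singular value" piece uniformly bounded in $z$. The only cosmetic difference is that you truncate by threshold (keeping the terms with $\mu_n\geq\varepsilon$), whereas the paper truncates at a fixed index $n>\alpha\epsilon^{-2}$; the two cuts give the same conclusion.
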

\begin{proof}
Let us write the singular value decomposition \cite{17}
\begin{equation}
 B(z) = \sum_{k=1}^\infty \mu_k \bigl(B(z)\bigr) \Bigl(\phi_k (z) , \cdot \Bigr)
\psi_k (z), 
\end{equation}
where $\{\phi_k (z)\}, \{\psi_k (z)\}$ are orthonormal sets. For $z \in (0,
z_0)$ the following inequality holds 
\begin{equation}
 n\mu_n^2 (z)\leq \mu_1^2 (z)+ \cdots + \mu_n^2 (z) \leq \|B (z) \|_{HS}^2 \leq
\alpha , 
\end{equation}
where $\alpha := \sup_{z \in (0, z_0)} \|B (z) \|_{HS}$. For any given $\epsilon
>0$ we can set $n $ equal to the integer, which is larger than $\alpha \epsilon^{-2}$. Then 
$B_\epsilon (z) := \sum_{k = n+1}^\infty \mu_k (z) \bigl(\phi_k (z) , \cdot
\bigr) \psi_k (z)$ and $P_\epsilon (z) := \sum_{k = 1}^n \mu_k (z) \bigl(\phi_k
(z) , \cdot \bigr) \psi_k (z)$ 
fulfill the requirement in Def.~\ref{def:1}.  
\end{proof}
\begin{definition}\label{def:2}
Consider two operator functions $A: \mathbb{R}_+ /\{0\} \to
\mathcal{C}(\mathcal{H}_1)$, $B: \mathbb{R}_+ /\{0\} \to
\mathcal{C}(\mathcal{H}_2)$ such that 
$A^*(z) = A(z)$, $B^*(z) = B(z)$.  We shall say that $A(z)$ and $B(z)$ are
equivalent and write $A(z) \sim B(z)$ if either $\lambda_n (A(z)) = \lambda_n
(B(z))$ for $n = 1, 2, \ldots$ or 
$A(z) - B(z) = \mathcal{O}_C(z)$ (the last case implies $\mathcal{H}_1 =
\mathcal{H}_2$). 
\end{definition}
Let us explain the point of Definition~\ref{def:2}. Below we shall prove that
$\mathcal{A}_0 (z) $ is equivalent to some operator function $T(z)$, whose 
spectrum is known explicitly. Then we shall prove that $\lim_{z
\to 0} |\ln |\ln z||^{-1} \mathfrak{n}(\mathcal{A}_0 (z) , a) = \lim_{z \to 0}
|\ln |\ln z||^{-1} \mathfrak{n}(T (z) , a)$ for $a >0$ 
thus proving Theorem~\ref{th:1}.

\begin{table}
\caption{\label{tab:1}Antisymmetry relations for $(\phi_1, \phi_2 , \phi_3 )\in
\mathcal{H}_A$.}
\begin{tabular}{|c|c|c|}
\hline
$p_{12}\phi_1 = -\phi_2$ & $p_{12}\phi_2 = -\phi_1$ & $p_{12}\phi_3 = -\phi_3$\\
\hline$p_{13}\phi_1 = -\phi_3$ & $p_{13}\phi_2 = -\phi_2$ & $p_{13}\phi_3 =
-\phi_1$\\
\hline$p_{13}\phi_1 = -\phi_1$ & $p_{13}\phi_2 = -\phi_3$ & $p_{23}\phi_3 =
-\phi_2$\\
\hline
\end{tabular}
\end{table}

Let us analyze the spectrum of $\mathcal{A}_0 (z)$. If $\mathcal{A}_0 (z) \Psi =
\lambda \Psi$, where 
$\Psi \in \mathcal{H}_A$ and $\lambda \neq 0$ then due to antisymmetry
requirements listed in Table~\ref{tab:1} we have 
\begin{equation}
 \Psi = 
\begin{pmatrix}
\mathcal{F}^{-1}_1 [ f_+ (q_1) \hat \eta_+ (p_1) + f_- (q_1) \hat \eta_- (p_1)]
\\ 
\mathcal{F}^{-1}_2 [ f_+ (q_2) \hat \eta_+ (p_2) + f_- (q_2) \hat \eta_- (p_2)]
\\ 
\mathcal{F}^{-1}_3 [ f_+ (q_3) \hat \eta_+ (p_3) + f_- (q_3) \hat \eta_- (p_3)]
 
\end{pmatrix}. 
\end{equation}
Substituting the last ansatz into the equation $\mathcal{A}_0 (z) \Psi = \lambda
\Psi$ and using (\ref{6}) we find that $f_\pm$ satisfy integral
equation 
\begin{equation}
L^{(a)}(z)
\begin{pmatrix}
f_+ \\
f_-
\end{pmatrix}
=
\begin{pmatrix}
 L^{(a)}_{11} (z) &L^{(a)}_{12} (z) \\ 
L^{(a)}_{21}(z) &L^{(a)}_{22} (z)
\end{pmatrix}
\begin{pmatrix}
f_+ \\
f_-
\end{pmatrix}
=\lambda 
\begin{pmatrix}
f_+ \\
f_-
\end{pmatrix} . 
\end{equation}
The matrix entries $L^{(a)}_{ik} (z)$ are integral operators on $L^2
(\mathbb{R}^2)$ with the kernels 
\begin{gather}
 L^{(a)}_{11} (p, q) =  \frac{-2 \hat \psi^*_+ \left(\frac 2{\sqrt 3} q + \frac
1{\sqrt 3} p \right)\hat \psi_+ \left(\frac 2{\sqrt 3} p + \frac 1{\sqrt 3} q
\right) g_z (|p|) g_z (|q|)}{(p^2 +q^2 +p\cdot q+z^2)} ,\label{54}\\
L^{(a)}_{12} (p, q) =  \frac{-2 \hat \psi^*_+ \left(\frac 2{\sqrt 3} q + \frac
1{\sqrt 3} p \right)\hat \psi_- \left(\frac 2{\sqrt 3} p + \frac 1{\sqrt 3} q
\right)g_z (|p|) g_z (|q|)}{(p^2 +q^2 +p\cdot q +z^2)} , \\
L^{(a)}_{22} (p, q) =  \frac{-2 \hat \psi^*_- \left(\frac 2{\sqrt 3} q + \frac
1{\sqrt 3} p \right)\hat \psi_- \left(\frac 2{\sqrt 3} p + \frac 1{\sqrt 3} q
\right)g_z (|p|) g_z (|q|)}{(p^2 +q^2 +p\cdot q+z^2)} , \label{56}
\end{gather}
and $L^{(a)}_{21} (p, q) = \bigl(L^{(a)}_{12} (q, p)\bigr)^*$. In
(\ref{54})-(\ref{56}) $\hat \psi_\pm (p)$ are Fourier transforms of the
functions 
\begin{equation}
 \psi_\pm (x) := |v(|x|)|^{\frac 12} \eta_\pm (x) . 
\end{equation}
The operator function $L^{(a)} (z)$ acts on $L^2 (\mathbb{R}^2) \oplus L^2
(\mathbb{R}^2) $ and clearly 
$L^{(a)} (z) \sim \mathcal{A}_0 (z)$. The relevant properties of the functions
$\hat \psi_\pm (p)$ are summarized in the following 
\begin{lemma}\label{lem:4}
In polar coordinates $ \hat \psi_\pm (p)= \psi_0 (|p|) e^{\pm i\varphi_p}$,
where $\psi_0 \in L^2 ((0, \infty); xdx)$. There are $\alpha, \beta, \gamma >0$
such that 
\begin{gather}
\left|\hat \psi_\pm (p) \right| \leq \alpha |p| , \label{58}\\
\left|\hat \psi_\pm (p+q) - \hat \psi_\pm (p) - \hat \psi_\pm (q) \right| \leq
\beta |p| |q| ,\label{59}\\
s^{-2}\psi^*_0 \left(\frac 1{\sqrt 3} s \right)  \psi_0 \left(\frac 2{\sqrt 3}s
\right) = \frac{c_0^2}6 + \omega (s) , \label{60}
\end{gather} 
where $|\omega (s)| \leq \gamma s^2 $ and $c^2_0$ is defined in (\ref{17}). 
\end{lemma}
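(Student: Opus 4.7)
The plan is to extract all four claims from $\psi_\pm(x) = |v(|x|)|^{1/2}\eta_0(|x|)e^{\pm i\varphi_x}$ -- which itself follows from spherical symmetry and the $l=\pm 1$ ground-state structure recalled above -- via three elementary ingredients: the angular decomposition of the 2D Fourier transform, the smoothness of $\hat\psi_\pm$ inherited from the exponential decay of $v$, and the first two terms of the Bessel series for $J_1$.

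First, passing to polar coordinates in the Fourier integral and using the identity $\int_0^{2\pi} e^{\pm i\phi} e^{-iu\cos\phi}\,d\phi = \mp 2\pi i\, J_1(u)$, I obtain
\begin{equation*}
 \hat\psi_\pm(p) = \mp i\, e^{\pm i\varphi_p}\int_0^\infty r|v(r)|^{1/2}\eta_0(r) J_1(|p|r)\,dr .
\end{equation*}
This establishes the advertised form $\hat\psi_\pm(p) = \psi_0(|p|)e^{\pm i\varphi_p}$ (the overall phase of $\eta_\pm$ can be fixed so that $\psi_0$ is real-valued and common to both signs, which is legitimate because the relevant eigenspace of the real symmetric operator $k(0)$ is 2-dimensional). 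Plancherel combined with $d^2p = \rho\,d\rho\,d\varphi_p$ immediately yields $\psi_0 \in L^2((0,\infty); x\,dx)$.

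Next, I will establish (58) and (59) from the smoothness of $\hat\psi_\pm$. The bound $|v(x)|\leq \alpha_1 e^{-\alpha_2|x|}$ together with $\|\eta_0\|_2 = 1$ and Cauchy--Schwarz makes every moment $\int (1+|x|)^k|\psi_\pm(x)|\,d^2x$ finite, so $\hat\psi_\pm\in C^2(\mathbb{R}^2)$ with uniformly bounded first and second derivatives. Since $\hat\psi_\pm(0) = 0$ (the mean of a function of angular momentum $\pm 1$ vanishes), the integral form of Taylor's theorem yields $\hat\psi_\pm(p) = \int_0^1 p\cdot\nabla\hat\psi_\pm(tp)\,dt$ and hence (58) with $\alpha = \|\nabla\hat\psi_\pm\|_\infty$. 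For (59), observe that $F(p,q) := \hat\psi_\pm(p+q) - \hat\psi_\pm(p) - \hat\psi_\pm(q)$ vanishes whenever $p=0$ or $q=0$; differentiating once in $p$ at fixed $q$ and then in $q$ produces
\begin{equation*}
 F(p,q) = \int_0^1\!\!\int_0^1 p^{T}\nabla^2\hat\psi_\pm(tp+sq)\,q\,dt\,ds ,
\end{equation*}
so $|F(p,q)|\leq\|\nabla^2\hat\psi_\pm\|_\infty |p||q|$, giving (59).

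Finally, for (60) I insert the series $J_1(u) = u/2 + O(u^3)$ into the Hankel representation. All radial moments $\int_0^\infty r^k|v(r)|^{1/2}|\eta_0(r)|\,dr$ are finite, so $\psi_0(\rho) = -(i/2)c_0\,\rho + O(\rho^3)$ uniformly on bounded intervals, where $c_0 = \int_0^\infty r^2|v(r)|^{1/2}\eta_0(r)\,dr$ matches (17) (the sign of $c_0$ is irrelevant because only $c_0^2$ enters). Substituting at $\rho = s/\sqrt 3$ and $\rho = 2s/\sqrt 3$:
\begin{equation*}
 \psi_0^*\!\bigl(s/\sqrt 3\bigr)\psi_0\!\bigl(2s/\sqrt 3\bigr) = \frac{s}{\sqrt 3}\cdot\frac{2s}{\sqrt 3}\Bigl[\frac{c_0^2}{4}+O(s^2)\Bigr] = \frac{c_0^2}{6}\,s^2 + O(s^4),
\end{equation*}
and dividing by $s^2$ yields (60) with $|\omega(s)|\leq\gamma s^2$. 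I do not foresee a genuine obstacle: once the angular decomposition is set up, each step reduces to a short standard Fourier-analytic computation, and the mild phase/reality bookkeeping for $\eta_0$ is handled by the variational characterization of the top eigenvector of $k(0)$.
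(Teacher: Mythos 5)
Your proof is correct, and for parts (58)--(59) it takes a genuinely different route from the paper. The paper estimates the Fourier integral directly by writing $\hat \psi_\pm (p) = (2\pi)^{-1}\int (e^{-ip\cdot r}-1)\psi_\pm(r)\,d^2r$ and using $|e^{-ip\cdot r}-1|\le |p||r|$ for (58), and the algebraic identity
\begin{equation*}
e^{-i(p+q)\cdot r} - e^{-ip\cdot r} - e^{-iq\cdot r} + 1 = (e^{-ip\cdot r}-1)(e^{-iq\cdot r}-1)
\end{equation*}
for (59), which gives $|p||q|r^2$ immediately inside the integral. You instead establish $\hat\psi_\pm\in C^2$ with bounded derivatives from moment bounds (justified by the exponential decay of $v$ and $\eta_0\in L^2$), then apply a first-order and a mixed second-order Taylor/FTC representation, noting $F(p,0)=F(0,q)=0$. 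Both arguments are valid and give the same quality of bound; the paper's is slightly more elementary and avoids invoking differentiability of $\hat\psi_\pm$, while yours is a standard and transparent calculus argument that also makes the smoothness structure explicit. For (60) the two proofs are essentially the same: you expand $J_1(u)=u/2+O(u^3)$ in the Hankel representation, the paper expands $e^{-ip\cdot r}$ and reads off the coefficient; both produce $\psi_0(\rho)=\mp i c_0\rho/2 + O(\rho^3)$ with $c_0$ matching (17), and the $\mp i$ phases cancel in the product $\psi_0^*\psi_0$ so that (60) holds with $|\omega(s)|\lesssim s^2$.

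One small remark on bookkeeping: your claim that the phase of $\eta_\pm$ can be fixed so that $\psi_0$ is real-valued and common to both signs deserves a word of care. With $\eta_\pm(x)=\eta_0(|x|)e^{\pm i\varphi_x}$ and $\eta_0$ real, the Hankel transform carries a factor $\mp i$, so the naturally defined radial parts are mutually conjugate rather than equal. The paper's use of $\psi_0^*$ in (60) tacitly accommodates this. In practice it makes no difference because only $\psi_0^*(\,\cdot\,)\psi_0(\,\cdot\,)$ and $c_0^2$ appear downstream, as you also note, but if you choose to re-phase $\eta_\pm$ to make $\psi_0$ real and shared you should say explicitly that this multiplies $\eta_\pm$ by $\pm i$ (so they are no longer given by $\eta_0 e^{\pm i\varphi_x}$ with $\eta_0$ real), and check that none of the earlier formulas depend on that convention. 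Alternatively, just keep $\psi_0^*$ as in the statement and skip the re-phasing entirely.
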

\begin{proof}
 The representation in polar coordinates follows immediately from symmetry
arguments. The Fourier transformed function is 
\begin{equation}
 \hat \psi_\pm (p) := \frac 1{2\pi} \int e^{-ip\cdot r} \psi_\pm (r) d^2 r =
\frac 1{2\pi} \int (e^{-ip\cdot r} - 1)\psi_\pm (r) d^2 r . 
\end{equation}
Using that $|e^{-ip\cdot r} - 1| \leq 2$ we obtain 
\begin{equation}
 |\hat \psi_\pm (p)| \leq  |p|\pi^{-1}\left\||r||v(|r|)|^{\frac 12} \eta_\pm
(r)\right\|_1 \leq |p|\pi^{-1} \left\||r||v(|r|)|^{\frac 12} \right\|_2 . 
\end{equation}
(\ref{59}) is obtained similarly using the inequality 
\begin{equation}
 \left| e^{-i(p+q)\cdot r} - (e^{-ip\cdot r} - 1) - e^{-iq\cdot r} \right| = 
\left| ( e^{-iq\cdot r} - 1) (e^{-ip\cdot r} - 1)\right| \leq 4 |p| |q| r^2 . 
\end{equation}
Expanding the exponent in the Fourier transform we obtain the expression 
\begin{equation}
 \hat \psi_\pm (p)= \frac{-i}{2} |p| e^{\pm i\varphi_p}\int_0^\infty s^2 \eta_0
(s) |v (s)|^\frac 12 + \mathcal{O} (|p|^3) , 
\end{equation}
from which (\ref{60}) follows. 
\end{proof}

Using Lemmas~\ref{lem:4}, \ref{lem:6} we conclude that $L^{(a)}(z) \sim L^{(b)}
(z)$, 
where the matrix entries of $L^{(b)}(z)$ are integral operators with the
kernels 
\begin{gather}
 L^{(b)}_{11} (p, q) =  \frac{-2 \left[ \hat \psi^*_+ \left(\frac 1{\sqrt 3} p
\right)\hat \psi_+ \left(\frac 2{\sqrt 3} p \right) + \hat \psi^*_+ \left(\frac
2{\sqrt 3} q \right)\hat \psi_+ \left(\frac 1{\sqrt 3} q \right)\right]g_z (|p|)
g_z (|q|)}{ (p^2 +q^2 +p\cdot q+z^2)} ,\\
L^{(b)}_{12} (p, q) =  \frac{-2 \left[ \hat \psi^*_+ \left(\frac 1{\sqrt 3} p
\right)\hat \psi_- \left(\frac 2{\sqrt 3} p \right) + \hat \psi^*_+ \left(\frac
2{\sqrt 3} q \right)\hat \psi_- \left(\frac 1{\sqrt 3} q \right)\right]g_z (|p|)
g_z (|q|)}{ (p^2 +q^2 +p\cdot q+z^2)} ,\\
L^{(b)}_{22} (p, q) =  \frac{-2 \left[ \hat \psi^*_- \left(\frac 1{\sqrt 3} p
\right)\hat \psi_- \left(\frac 2{\sqrt 3} p \right) + \hat \psi^*_- \left(\frac
2{\sqrt 3} q \right)\hat \psi_- \left(\frac 1{\sqrt 3} q \right)\right]g_z (|p|)
g_z (|q|)}{(p^2 +q^2 +p\cdot q+z^2)} , 
\end{gather}
and $L^{(b)}_{21} (p, q) = \bigl(L^{(b)}_{12} (q, p)\bigr)^*$. The operator
$L^{(b)}(z)$ acts on  $L^2 (\mathbb{R}^2) \oplus L^2 (\mathbb{R}^2) $. 
By (\ref{58}) we have 
\begin{gather}
 \left|\hat \psi^*_+ \left((1/\sqrt 3) p \right)\hat \psi_+ \left((2/\sqrt 3) p
\right)\right| \left| \frac 1{p^2 +q^2 +p\cdot q+z^2} -  \frac 1{p^2 +q^2
}\right|  \nonumber\\
\leq \frac{2 \alpha^2}3  \frac {|p||q|}{p^2 +q^2 +p\cdot q} + \frac{2 \alpha^2}3
\frac {z^2}{p^2 +q^2 +p\cdot q + z^2} . 
\end{gather}
Hence, by Lemma~\ref{lem:6} we have $L^{(b)}(z) \sim L^{(c)} (z)$, 
where $L^{(c)}(z)$ has the matrix entries
\begin{gather}
 L^{(c)}_{11} (p, q) =  \frac{-2 \left[ \hat \psi^*_+ \left(\frac 1{\sqrt 3} p
\right)\hat \psi_+ \left(\frac 2{\sqrt 3} p \right) + \hat \psi^*_+ \left(\frac
2{\sqrt 3} q \right)\hat \psi_+ \left(\frac 1{\sqrt 3} q \right)\right]g_z (|p|)
g_z (|q|)}{(p^2 +q^2 )} ,\\
L^{(c)}_{12} (p, q) =  \frac{-2 \left[ \hat \psi^*_+ \left(\frac 1{\sqrt 3} p
\right)\hat \psi_- \left(\frac 2{\sqrt 3} p \right) + \hat \psi^*_+ \left(\frac
2{\sqrt 3} q \right)\hat \psi_- \left(\frac 1{\sqrt 3} q \right)\right]g_z (|p|)
g_z (|q|)}{  (p^2 +q^2) } ,\\
L^{(c)}_{22} (p, q) =  \frac{-2 \left[ \hat \psi^*_- \left(\frac 1{\sqrt 3} p
\right)\hat \psi_- \left(\frac 2{\sqrt 3} p \right) + \hat \psi^*_- \left(\frac
2{\sqrt 3} q \right)\hat \psi_- \left(\frac 1{\sqrt 3} q \right)\right]g_z (|p|)
g_z (|q|)}{ (p^2 +q^2 )} , 
\end{gather}
and $L^{(c)}_{21} (p, q) = \bigl(L^{(c)}_{12} (q, p)\bigr)^*$. 
We are interested in the nontrivial spectrum of the compact operator
$L^{(c)}(z)$ on $L^2 (\mathbb{R}^2) \oplus L^2 (\mathbb{R}^2)$, that is we look
for solutions 
of the equation 
\begin{equation}\label{73}
L^{(c)}(z)
\begin{pmatrix}
F_+ \\
F_-
\end{pmatrix}
=
\begin{pmatrix}
 L^{(c)}_{11} (z) &L^{(c)}_{12} (z) \\ 
L^{(c)}_{21}(z) &L^{(c)}_{22} (z)
\end{pmatrix}
\begin{pmatrix}
F_+ \\
F_-
\end{pmatrix}
=\lambda 
\begin{pmatrix}
F_+ \\
F_-
\end{pmatrix} ,  
\end{equation} 
where $\lambda \neq 0$. 
Now we employ the symmetry of integral equations and expand $F_+ , F_-$ as
follows
\begin{gather}
  F_+ (p) = \sum_{l = -\infty}^\infty e^{i(l-1)\varphi_p} f^{(l)}_+ (|p|) ,
\label{74}\\ 
 F_- (p) = \sum_{l = -\infty}^\infty  e^{i(l+1)\varphi_p} f^{(l)}_- (|p|) , 
\label{75}
\end{gather}
where $f^{(l)}_\pm (x) \in L^2 ([0,r_\varepsilon]; xdx)$. Substituting
(\ref{74}), (\ref{75}) into (\ref{73}) we find that $f^{(l)}_+ (|p|) = 0$ for
all $l$ except $l = \pm 1$. 
Thus we conclude that $L^{(c)} (z) \sim T^{+}(z) \oplus T^{-}(z)$, where 
$T^{\pm}(z)$ act on 
$L^2 ([0,r_\varepsilon]; xdx) \oplus L^2 ([0,r_\varepsilon]; xdx)$ and have the
structure 
\begin{equation}
T^{\pm}(z) = 
\begin{pmatrix}
 T^{\pm}_{11} (z) &T^{\pm}_{12} (z) \\ 
T^{\pm}_{21}(z) &T^{\pm}_{22} (z) 
\end{pmatrix}.  
\end{equation}
The matrix entries are integral operators on $L^2 ([0,r_\varepsilon]; xdx)$ with
the following kernels
\begin{gather}
 T^{+}_{11} (s, t) = T^{-}_{22} (s, t) =  (-4\pi)\frac{  \left[\psi^*_0
\left(\frac 1{\sqrt 3} s \right)  \psi_0 \left(\frac 2{\sqrt 3}s \right) +  
\psi^*_0 \left(\frac 2{\sqrt 3} t \right)  \psi_0 \left(\frac 1{\sqrt 3}
t\right)\right] g_z (s) g_z (t)}{(s^2 +t^2 )} ,\\
T^+_{12}(s, t) = T^-_{21}(s, t) = (-4\pi) \frac{  \psi_0 \left(\frac 1{\sqrt 3}
t \right)  \psi^*_0 \left(\frac 2{\sqrt 3}t \right) g_z(s) g_z (t)}{(s^2 +t^2 )}
,\\
T^+_{21}(s, t) = T^-_{12}(s, t) =  (-4\pi) \frac{  \psi^*_0 \left(\frac 1{\sqrt
3} s \right)  \psi_0 \left(\frac 2{\sqrt 3}s \right) g_z (s) g_z (t)}{(s^2 +t^2
)} , \\
  T^{+}_{22} (s, t) =  T^{-}_{11} (s, t) = 0.  
\end{gather}
Below we shall consider only the spectrum of $T^{+}(z)$. The spectrum of
$T^{-}(z)$ is considered analogously. 
Let us introduce two integral operator functions $B_{1,2} (z)$ on $L^2
([0,r_\varepsilon]; xdx)$ with the following integral kernels 
\begin{gather}
 B_1 (s,t) = \frac{  \psi^*_0 \left(\frac 1{\sqrt 3} s \right)  \psi_0
\left(\frac 2{\sqrt 3}s \right)g_z (s) g_z (t) \chi_{\{s <t\}}}{(s^2 +t^2 )} ,
\label{81}\\
B_2 (s,t) = \psi^*_0 \left(\frac 1{\sqrt 3} s \right)  \psi_0 \left(\frac
2{\sqrt 3}s \right)g_z (s) g_z (t) \chi_{\{s \geq t\}}\left[\frac{1}{s^2 +t^2} -
\frac{1}{s^2}\right]. 
\end{gather}
Our aim is to prove that $ B_{1,2} (z) = \mathcal{O}_C (z)$. The function
$\chi_{\{s <t\}}$ is such that $\chi_{\{s <t\}} = 1$ if $s<t$ and 
$\chi_{\{s <t\}} = 0$ if $s \geq t$ (the notation for 
this function using other relation symbols is self-explanatory). Using Lemma~\ref{lem:4} 
and (\ref{34}) for $z \in (0, r_\varepsilon]$ we get  
\begin{gather}
 \|B_1 (z)\|^2_{HS} \leq \delta^2 \alpha^4 \int_0^{r_\varepsilon}
\int_0^{r_\varepsilon}  \frac{s^5 t \chi_{\{s <t\}}ds dt}{(s^2 +z^2)|\ln (s^2
+z^2)| (t^2 +z^2)|\ln (t^2 +z^2)|(s^2 +t^2)^2} \nonumber\\
\leq  \delta^2 \alpha^4 \int_0^{r_\varepsilon} \int_0^{r_\varepsilon}  \frac{s^2
 ds dt}{|\ln s^2||\ln t^2 |(s^2 +t^2)^2} , 
\end{gather}
where we have used that for small $s, z \in (0, r_\varepsilon]$ one has $(s^2 +z^2)|\ln (s^2 +z^2)| \geq
s^2 |\ln s^2|$. In the last integral we pass to polar coordinates $s = \rho \sin
\phi$, 
$t = \rho \cos \phi$ and obtain the inequality 
\begin{equation}
  \|B_1 (z)\|^2_{HS} \leq   \frac{\delta^2 \alpha^4 }{4}\int_0^{2 r_\varepsilon}
\int_0^{\pi/2} \frac{\sin^2 \phi \; d\rho \: d\phi}{\rho|\ln \rho|^2} =
\frac{\delta^2 \alpha^4 \pi}{16} \int_0^{2 r_\varepsilon}  \frac{ d\rho
}{\rho|\ln \rho|^2}  . 
\end{equation}
The last integral converges \cite{15} and from Proposition~\ref{prop:1} it follows that $ B_1 (z) = \mathcal{O}_C (z)$. Similarly one shows that $ B_2 (z) = \mathcal{O}_C (z)$. 
Using this fact we obtain $T^+(z) \sim T^{(a)}
(z)$, where $T^{(a)} (z)$ acts on the same space as $T^+(z)$ and its matrix
entries 
have the following integral kernels 
\begin{gather}
 T^{(a)}_{11} (s, t) =  (-4\pi)\left[s^{-2}\psi^*_0 \left(\frac 1{\sqrt 3} s
\right)  \psi_0 \left(\frac 2{\sqrt 3}s \right)\chi_{\{s \geq t\}} \right.
\nonumber\\
\left. +   t^{-2}\psi^*_0 \left(\frac 2{\sqrt 3} t \right)  \psi_0 \left(\frac
1{\sqrt 3} t\right)\chi_{\{s \leq t\}}\right] g_z (s) g_z (t) , \label{87}\\
T^{(a)}_{12} (s, t) =  (-4\pi) t^{-2} \psi_0 \left(\frac 1{\sqrt 3} t \right) 
\psi^*_0 \left(\frac 2{\sqrt 3}t \right)  \chi_{\{s \leq t\}} g_z (s) g_z (t)  ,
\\
T^{(a)}_{21} (s, t) = (-4\pi) s^{-2}\psi^*_0 \left(\frac 1{\sqrt 3} s \right) 
\psi_0 \left(\frac 2{\sqrt 3}s \right) \chi_{\{s \geq t\}}g_z (s) g_z (t) 
\label{89}, 
\end{gather}
$ T^{(a)}_{22} (s, t) = 0$. 
Now let us consider the expression in square brackets in (\ref{87}). Due to
(\ref{60}) we have a. e.
\begin{gather}
 s^{-2}\psi^*_0 \left(\frac 1{\sqrt 3}  s \right)  \psi_0 \left(\frac 2{\sqrt
3}s \right)\chi_{\{s \geq t\}} +   t^{-2}\psi^*_0 \left(\frac 2{\sqrt 3} t
\right)  \psi_0 \left(\frac 1{\sqrt 3} t\right)\chi_{\{s \leq t\}} \nonumber\\
= \frac {c_0^2}6 + \omega(s) \chi_{\{s \geq t\}} + \chi_{\{s \leq t\}} \omega(t)
= \frac {c_0^2}6 + \omega(s) + \omega(t) - \left[\omega(s) \chi_{\{s \leq t\}} +
\chi_{\{s \geq t\}} \omega(t)\right] . 
\end{gather}
Using (\ref{34}) and (\ref{60}) one  can easily check that $ B_3 (z) 
=\mathcal{O}_C (z)$, where the integral operator 
$B_3 (z)$ acts on $L^2 ([0,r_\varepsilon]; xdx)$ and has the kernel 
\begin{equation}
 B_3 (s,t) = \omega(s)\chi_{\{s \leq t\}}g_z (s) g_z (t) . 
\end{equation}
Making similar decompositions for other kernels in (\ref{87})-(\ref{89}) and
using that $ B_3 (z) =\mathcal{O}_C (z)$ we conclude that 
\begin{equation}\label{93}
 T^{(a)} (z) \sim S(z) - \frac{2\pi c_0^2}{3}\mathcal{T}(z) , 
\end{equation}
where $S(z), \mathcal{T}(z)$ act on the same space as $T^{(a)}(z)$ and their
matrix entries have the following integral kernels 
\begin{gather}
 S_{11}(s,t) = (-4\pi) \left[\frac{c_0^2}6 + \omega(s) + \omega(t)\right]g_z (s)
g_z (t) \\
 S_{12}(s,t) = (-4\pi)\omega(t) g_z (s) g_z (t) ,\\
 S_{21}(s,t) = (-4\pi)\omega(s) g_z (s) g_z (t) ,
\end{gather}
and 
\begin{gather}
\mathcal{T}_{12} (s, t) =   \chi_{\{s \leq t\}} g_z(s) g_z(t), \\
\mathcal{T}_{21} (s, t) =  \chi_{\{s \geq t\}} g_z(s) g_z(t),\\
\mathcal{T}_{11} (s, t) = \mathcal{T}_{22} (s, t) =  S_{22}(s,t) = 0. 
\end{gather}
Finally, from (\ref{93}) we conclude that 
\begin{equation}\label{100}
 T^+ (z) \sim - \frac{2\pi c_0^2}3 \mathcal{T}(z)
\end{equation}
because $S(z)$ for all $z >0$ is a rank 3 operator. 
The nonzero spectrum of the operator $\mathcal{T}(z)$ can be calculated
explicitly. Note that $\sigma (\mathcal{T}(z))/\{0\} = \sigma
(\mathcal{T'}(z))/\{0\}$, where 
\begin{gather}
\mathcal{T'}_{12} (s, t) =  \chi_{\{s \leq t\}} g_z^2 (t) ,\\
\mathcal{T'}_{21} (s, t) =  \chi_{\{s \geq t\}} g_z^2 (t),\\
\mathcal{T'}_{11} (s, t) = \mathcal{T'}_{22} (s, t) = 0. 
\end{gather}
(This is the consequence of the fact that $\sigma (AB)/\{0\} = \sigma
(BA)/\{0\}$ for any bounded $A, B$, see \cite{16}). The equation 
$\mathcal{T'}(z) f = \lambda f$ for $\lambda \neq 0$ takes the form 
\begin{gather}
 \lambda f_1 (s) = \int_s^{r_\varepsilon} f_2 (t) g_z^2 (t) tdt \label{104}\\
 \lambda f_2 (s) = \int_0^s f_1 (t) g_z^2 (t) tdt. \label{105}
\end{gather}
Let us make the change of variables in (\ref{104})-(\ref{105}) setting $x = \xi
(s)$, where 
\begin{equation}
 \xi(s) := \int_0^s g_z^2(t) t dt \label{106}
\end{equation}
is monotone increasing. Then Eqs. (\ref{104})-(\ref{105}) take the form 
\begin{gather}
 \lambda \tilde f_1 (x) = \int_x^{\xi(r_\varepsilon)} \tilde f_2 (x') dx'
\label{107}\\
 \lambda \tilde f_2 (x) = \int_0^x \tilde f_1 (x') dx' ,  \label{108}
\end{gather}
where $\tilde f_i \in C^1 ([0, \xi(r_\varepsilon)])$. 
Similar integral equations were obtained in \cite{4,jesper}. Differentiating
(\ref{107})-(\ref{108}) over $x$ gives $\lambda (d\tilde f_1/dx) = - \tilde f_2
(x)$; $\lambda (d\tilde f_2/dx) = \tilde f_1 (x)$. 
These differential equations are solved by $\tilde f_1 (x) = \cos (\lambda
^{-1}x + \varphi_\lambda)$, and $\tilde f_2 (x) = \sin (\lambda ^{-1}x +
\varphi_\lambda)$. 
Substituting these expressions back into (\ref{107})-(\ref{108}) we find that
$\mathcal{T}(z)$ has an infinite number 
of positive and negative eigenvalues, namely, 
\begin{equation}\label{109}
 \lambda_k \bigl(\mathcal{T}(z)\bigr)= \lambda_k \bigl(-\mathcal{T}(z)\bigr) =
\frac{\xi(r_\varepsilon)}{(\pi/2) + \pi (k-1)} , \quad \textnormal{ where $k=1,
2, \ldots$}. 
\end{equation}
Note that 
\begin{equation}\label{volos13}
 \lim_{z \to 0} \left| \ln |\ln z^2 |\right|^{-1} \xi(r_\varepsilon )= \frac
2{\pi c_0^2} , 
\end{equation}
where $c_0^2$ is defined in (\ref{17}). Indeed, for any $\rho \in (0, r_\varepsilon)$ 
\begin{equation}
  \lim_{z \to 0} \left| \ln |\ln z^2 |\right|^{-1} \bigl[ \xi(r_\varepsilon ) - \xi(\rho )  \bigr] = 0. 
\end{equation}
Due to (\ref{34}) and (\ref{106}) we obtain 
\begin{equation}
 \frac{\delta ' }2  \leq \lim_{z \to 0} \left| \ln |\ln z^2 |\right|^{-1} \xi(\rho ) \leq  \frac{\delta}2 , 
\end{equation}
where we have used that 
\begin{equation}
 \lim_{z \to 0} \left| \ln |\ln z^2 |\right|^{-1} \int_0^\rho  \frac{t dt}{(t^2 +z^2)|\ln (t^2 +z^2)|} = \frac 12 . 
\end{equation}
Recall that for $\rho \to 0$ we have $\delta = 4/(\pi c_0^2) + \hbox{o}(\rho)$ and $\delta' = 4/(\pi c_0^2) +
\hbox{o}(\rho)$ (see the text below Eq.~(\ref{34})), which results in (\ref{volos13}). 
Now let us prove 
\begin{lemma}\label{lem:5}
 Suppoze $K:\mathbb{R}_+/\{0\} \to \mathcal{C}(\mathcal{H})$ is such that $K(z)
\sim - \mathcal{T}(z)$. Then for any $a >0$ 
\begin{equation}
\lim_{z \to 0} \left| \ln |\ln z^2 |\right|^{-1}\mathfrak{n}(K(z), a ) = \frac 2{\pi^2 c_0^2 a}. 
\end{equation}
\end{lemma}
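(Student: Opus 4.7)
The plan is to first determine $\mathfrak{n}(-\mathcal{T}(z),a)$ explicitly from the formula (\ref{109}), then carry this asymptotics across the $\sim$ relation using the decomposition provided by Definition~\ref{def:1} together with the eigenvalue counting inequality (\ref{01.09}).

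First I would compute $\mathfrak{n}(-\mathcal{T}(z),a)$ directly. Since $\lambda_k(-\mathcal{T}(z)) = \xi(r_\varepsilon)/\bigl(\pi/2 + \pi(k-1)\bigr)$, the condition $\lambda_k(-\mathcal{T}(z)) > a$ is equivalent to $k < \xi(r_\varepsilon)/(\pi a) + 1/2$, giving
\begin{equation*}
\mathfrak{n}(-\mathcal{T}(z),a) = \left\lfloor \frac{\xi(r_\varepsilon)}{\pi a} + \frac12 \right\rfloor.
\end{equation*}
Combining this with the asymptotic (\ref{volos13}), dividing by $|\ln|\ln z^2||$ and letting $z \to 0$, I would obtain
\begin{equation*}
\lim_{z\to 0} \bigl|\ln|\ln z^2|\bigr|^{-1} \mathfrak{n}(-\mathcal{T}(z),a) = \frac{1}{\pi a}\cdot \frac{2}{\pi c_0^2} = \frac{2}{\pi^2 c_0^2 a},
\end{equation*}
which establishes the claim for $K(z)=-\mathcal{T}(z)$ itself (and, by Definition~\ref{def:2}, for any $K(z)$ sharing the same positive eigenvalues).

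Next I would handle the nontrivial case of Definition~\ref{def:2}, namely $K(z) + \mathcal{T}(z) = \mathcal{O}_C(z)$. Fix $\epsilon \in (0, a/2)$ and invoke the Remark after Definition~\ref{def:1} to decompose $K(z) + \mathcal{T}(z) = R_\epsilon(z) + Q_\epsilon(z)$ into self-adjoint summands with $\|R_\epsilon(z)\| < \epsilon$ and $\dim \Ran Q_\epsilon(z) \le C_\epsilon$ uniformly in $z \in (0,z_0)$. Applying the inequality (\ref{01.09}) twice, once to $K(z) = -\mathcal{T}(z) + R_\epsilon(z) + Q_\epsilon(z)$ and once to $-\mathcal{T}(z) = K(z) - R_\epsilon(z) - Q_\epsilon(z)$, and noting that self-adjointness forces $\mathfrak{n}(\pm R_\epsilon(z),\epsilon)=0$ and $\mathfrak{n}(\pm Q_\epsilon(z),\epsilon)\le C_\epsilon$, yields the two-sided sandwich
\begin{equation*}
\mathfrak{n}(-\mathcal{T}(z), a+2\epsilon) - C_\epsilon \;\le\; \mathfrak{n}(K(z),a) \;\le\; \mathfrak{n}(-\mathcal{T}(z), a-2\epsilon) + C_\epsilon.
\end{equation*}

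Finally, dividing by $|\ln|\ln z^2||$ and passing to $\liminf$ and $\limsup$ as $z\to 0$, the finite constant $C_\epsilon$ disappears in the limit, and the first step gives
\begin{equation*}
\frac{2}{\pi^2 c_0^2 (a+2\epsilon)} \le \varliminf_{z\to 0} \bigl|\ln|\ln z^2|\bigr|^{-1}\mathfrak{n}(K(z),a) \le \varlimsup_{z\to 0} \bigl|\ln|\ln z^2|\bigr|^{-1}\mathfrak{n}(K(z),a) \le \frac{2}{\pi^2 c_0^2 (a-2\epsilon)}.
\end{equation*}
Sending $\epsilon\to 0$ closes the sandwich and proves the lemma. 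The whole argument is essentially algebraic once the explicit spectrum (\ref{109}) and the asymptotic (\ref{volos13}) are in hand; there is no serious obstacle, only the mild bookkeeping of ensuring $a > 2\epsilon$ and that the self-adjoint version of the $\mathcal{O}_C$ decomposition (from the Remark) can be used so that $\mathfrak{n}(\pm R_\epsilon,\epsilon)=0$ rather than merely $\mathfrak{n}_\mu(R_\epsilon,\epsilon)=0$.
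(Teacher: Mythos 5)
Your proposal is correct and follows essentially the same route as the paper: compute $\mathfrak{n}(-\mathcal{T}(z),a)$ from the explicit spectrum (\ref{109}), use (\ref{volos13}) to convert from the $\xi(r_\varepsilon)$ scale to the $|\ln|\ln z^2||$ scale, and then transfer the asymptotics across $\sim$ by applying (\ref{01.09}) twice to the self-adjoint decomposition $K(z)=-\mathcal{T}(z)+B_\epsilon(z)+P_\epsilon(z)$ furnished by the Remark after Definition~\ref{def:1}, after which the finite-rank term drops out in the limit and $\epsilon\to 0$ closes the sandwich.
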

\begin{proof}
From (\ref{volos13}) it follows that  $\xi(r_\varepsilon ) \to + \infty$
when $z \to 0$. Hence, from (\ref{109}) we get 
\begin{equation}\label{09.11}
  \lim_{z \to 0} [\xi(r_\varepsilon )]^{-1}\mathfrak{n}(-\mathcal{T}(z), a ) = \frac 1{\pi a} . 
\end{equation}
We only have to consider the case when $\mathcal{H} = L^2 ([0,r_\varepsilon];
xdx) \oplus L^2 ([0,r_\varepsilon]; xdx)$, otherwise the statement is obvious. 
For any fixed $\epsilon \in (0, a/2)$  
there exist $z_0, k >0$, and self-adjoint $B_\epsilon, P_\epsilon : \mathbb{R}_+
\to \mathcal{C}(\mathcal{H})$ such that $K(z) =
-\mathcal{T}(z) + B_\epsilon (z) + P_\epsilon (z)$, whereby $\sup_{z \in z_0} \|B_\epsilon (z)\| <
\epsilon$, and 
$\sup_{z \in (0, z_0)} \dim \Ran P_\epsilon (z) < k$. By (\ref{01.09})
\begin{gather}
 \mathfrak{n} (-\mathcal{T}(z), a + 2\epsilon) \leq \mathfrak{n} (K(z), a ) + \mathfrak{n} (-B_\epsilon (z), \epsilon ) + \mathfrak{n} (-P_\epsilon (z), \epsilon ) \label{09.12}\\
 \mathfrak{n} (K(z), a ) \leq \mathfrak{n} (-\mathcal{T}(z), a - 2\epsilon)  + \mathfrak{n} (B_\epsilon (z), \epsilon ) + \mathfrak{n} (P_\epsilon (z), \epsilon ) .\label{09.13} 
\end{gather}
Since $\mathfrak{n} (\pm B_\epsilon (z), \epsilon ) = 0$ and $\mathfrak{n} (P_\epsilon (z), \epsilon ) \leq k$ we obtain from (\ref{09.12}), (\ref{09.13}) and (\ref{09.11})
\begin{equation}
 \frac 1{\pi (a+2\epsilon)} \leq \varliminf_{z \to 0} [\xi(r_\varepsilon )]^{-1}\mathfrak{n}(K(z), a ) \leq  \varlimsup_{z \to 0} [\xi(r_\varepsilon )]^{-1}\mathfrak{n}(K(z), a ) \leq \frac 1{\pi (a-2\epsilon)} . 
\end{equation}
Letting $\epsilon \to 0$ we prove that 
\begin{equation}
 \lim_{z \to 0} [\xi(r_\varepsilon )]^{-1}\mathfrak{n}(K(z), a ) = \frac 1{\pi a} . 
\end{equation}
Now the result follows from (\ref{volos13}). 
\end{proof}
\begin{proof}[Proof of Theorem~\ref{th:09.2}]
 From Lemma~\ref{lem:5} and (\ref{100}) it follows that 
\begin{equation}\label{124}
\lim_{z \to 0} \left| \ln |\ln z^2 |\right|^{-1} \mathfrak{n}(T^\pm (z), a ) = 
\lim_{z \to 0} \left| \ln |\ln z^2 |\right|^{-1} \mathfrak{n}\left (\frac 3 {2\pi c_0^2} T^\pm (z), \frac {3a}{2\pi c_0^2} \right) = \frac 4{3\pi a}
\end{equation}
(we have proved (\ref{124}) for $T^+ (z)$, but the analysis of the operator $T^-
(z)$ leads to the same result). By the above analysis 
$\mathcal{A}_0 (z)\sim T^+ (z) \oplus T^- (z)$. Repeating the arguments in the
proof of Lemma~\ref{lem:5} we obtain 
\begin{equation}\label{volos12}
\lim_{z \to 0} \left| \ln |\ln z^2 |\right|^{-1} \mathfrak{n}(\mathcal{A}_0 (z), a ) =
\lim_{z \to 0} \left| \ln |\ln z^2 |\right|^{-1}  \Bigl( \mathfrak{n}(T^+ (z), a ) + 
\mathfrak{n}(T^- (z), a )\Bigr) = \frac 8{3\pi a} . 
\end{equation}
\end{proof}
The proof of the following lemma uses the idea in \cite{2}. 
\begin{lemma}\label{lem:6}
Suppose that the integral operator functions $C_{1,2} (z) : \mathbb{R}_+/\{0\}
\to \mathcal{C}(L^2(\mathbb{R}^2))$ have the integral kernels 
\begin{gather}
 C_1(p,p') = \frac{z^2g_z(|p|)g_z(|p'|)}{p^2 + p'^2 + p\cdot p' + z^2} , \\
C_2(p,p') = \frac{|p||p'|g_z(|p|)g_z(|p'|)}{p^2 + p'^2 + p\cdot p' + z^2} . 
\end{gather}
Then $C_{1,2}(z)  = \mathcal{O}_C (z)$. 
\end{lemma}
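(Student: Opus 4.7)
The plan is to prove both $C_{1,2}(z) = \mathcal{O}_C(z)$ by showing that $\|C_i(z)\|_{HS}$ is uniformly bounded for $z \in (0, r_\varepsilon]$ and then invoking Proposition~\ref{prop:1}. Both estimates share the same opening: I would pass to polar coordinates $p = s(\cos\alpha,\sin\alpha)$, $p' = t(\cos\beta,\sin\beta)$, note that the kernel depends on the angles only through $\gamma = \alpha-\beta$, and use the explicit evaluation $\int_0^{2\pi} d\gamma/(a+b\cos\gamma)^2 = 2\pi a/(a^2-b^2)^{3/2}$ with $a = s^2+t^2+z^2$, $b = st$. The bound $a^2 - b^2 \geq 3a^2/4$ (which follows from $b \leq a/2$) then collapses this factor to an $O(1/a^2)$ denominator, after which I would insert the pointwise bound $g_z^2(s) \leq \delta/((s^2+z^2)|\ln(s^2+z^2)|)$ from (\ref{34}), keeping its logarithmic weight intact.

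For $C_1$, the reductions above leave an integrand of the schematic form $z^4 \cdot st/[(s^2+z^2)(t^2+z^2)|\ln(s^2+z^2)||\ln(t^2+z^2)|(s^2+t^2+z^2)^2]$. I would rescale by $s^2 = z^2 u$, $t^2 = z^2 v$; the $z^4$ prefactor is consumed exactly, leaving the integrand $1/[(u+1)(v+1)(u+v+1)^2|\ln z^2(u+1)||\ln z^2(v+1)|]$. The elementary identity $(u+v+1)^2 - (u+1)(v+1) = u^2+v^2+uv+u+v \geq 0$ decouples $u$ and $v$, so the double integral is dominated by $(\int du/[(u+1)^2|\ln z^2(u+1)|])^2$. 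The further substitution $\mu = z^2(u+1)$ turns each factor into $z^2\int_{z^2}^{r_\varepsilon^2+z^2} d\mu/(\mu^2|\ln\mu|)$, uniformly bounded by $1/\ln 8$ because $\mu \leq r_\varepsilon^2+z^2 < 2r_\varepsilon^2 < 1/8$ implies $|\ln\mu| \geq \ln 8$.

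For $C_2$, the extra $|p||p'| = st$ in the numerator turns the integrand into $s^3 t^3 g_z^2 g_z^2/(s^2+t^2+z^2)^2$, and the crude bound $s^3/(s^2+z^2) \leq s$ gives $s^3 g_z^2(s) \leq \delta s/|\ln(s^2+z^2)|$. I would then switch to polar coordinates $s = R\cos\theta$, $t = R\sin\theta$; the inequality $|\ln(s^2+z^2)| \geq |\ln(R^2+z^2)|$ (valid since $s^2 \leq R^2$ and all arguments are below $1$) together with the angular integration $\int_0^{\pi/2}\sin 2\theta\,d\theta = 1$ reduces the estimate to $\|C_2(z)\|_{HS}^2 \leq C \int_0^{r_\varepsilon\sqrt{2}} R^3 dR/[(R^2+z^2)^2|\ln(R^2+z^2)|^2]$. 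Substituting $u = R^2+z^2$ (using $R^3 dR \leq u\,du/2$) and then $w = |\ln u|$ evaluates this integral as $1/|\ln(2r_\varepsilon^2+z^2)| - 1/|\ln z^2|$, which tends to $1/|\ln(2r_\varepsilon^2)|$ as $z \to 0$.

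The main obstacle, which dictates the shape of every step, is that the purely power-law part of the radial integral for $C_2$, namely $\int_0^{r_\varepsilon\sqrt 2} R^3 dR/(R^2+z^2)^2$, is logarithmically divergent as $z \to 0$: only the $|\ln(R^2+z^2)|^{-2}$ factor inherited from the two $g_z^2$ bounds renders it convergent. Any premature use of a uniform lower bound such as $|\ln(\cdot)| \geq \ln 8$ would destroy this cancellation and produce a spurious divergence of order $|\ln z|$ in $\|C_2\|_{HS}$. The analogue in the $C_1$ estimate is the identity $(u+v+1)^2 \geq (u+1)(v+1)$, which is the decisive device that decouples the two logarithmic factors so that each single-variable integral can be handled separately.
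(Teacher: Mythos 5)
Your argument is correct, and for $C_2$ it takes a genuinely different route from the paper's. For $C_1$ you merely spell out the paper's ``by a direct check,'' so the two coincide there. For $C_2$, however, you prove the stronger fact $\sup_{z\in(0,r_\varepsilon]}\|C_2(z)\|_{HS}<\infty$ and then invoke Proposition~\ref{prop:1} alone; the decisive point is the one you flag yourself: after $s^3/(s^2+z^2)\le s$ the bare power-law radial integral diverges like $|\ln z^2|$, and it is the retained weight $|\ln(R^2+z^2)|^{-2}$ that, after the substitution $w=|\ln u|$, yields the uniformly bounded quantity $|\ln(2r_\varepsilon^2+z^2)|^{-1}-|\ln z^2|^{-1}$. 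The paper never asserts a uniform Hilbert--Schmidt bound for the full $C_2$. Instead it introduces the bounded operator $\tilde C$ with kernel $\chi_{[0,1]}(|p|)\chi_{[0,1]}(|p'|)/(p^2+p'^2)$, proves $\|\tilde C\|\le\pi/4$ by conjugating the radial part with $[Wf](t)=e^{-t}f(e^{-t})$ into a convolution with $1/(2\cosh(t-t'))$ and applying Young's inequality, and then writes $C_2=\chi_{(r,\infty)}C_2\chi_{(r,\infty)}+\{\textnormal{cross terms}\}$: the first piece is $\mathcal O_C(z)$ by Proposition~\ref{prop:1} (it is uniformly Hilbert--Schmidt once the origin is excised), while the cross terms have operator norm made arbitrarily small by choosing $r$ small, using $\|\tilde C\|<\infty$ together with (\ref{34}). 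Your version is shorter and dispenses with $\tilde C$ entirely; it rests on the sharper observation, not stated in the paper, that the two $g_z$-weights are just strong enough to make the singularity of $C_2$ at $|p|=|p'|=0$ square-integrable and not merely operator-bounded. The paper's route avoids having to rely on that borderline cancellation, at the cost of a separate operator-norm lemma for $\tilde C$.
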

\begin{proof}
By a direct check one finds that $\sup_{z>0} \|C_1(z)\|_{HS} < \infty$, hence $C_1 (z)  = \mathcal{O}_C (z)$ by Proposition~\ref{prop:1}. 
Consider the integral operator $\tilde C$ on $L^2(\mathbb{R}^2)$ with the kernel 
\begin{equation}
 \tilde C (p,p') = \frac{\chi_{[0,1]}(|p|)\chi_{[0,1]}(|p'|)}{p^2 + p'^2 } . 
\end{equation}
Let us show that this operator is bounded. Using the expansion like in (\ref{74})-(\ref{75}) we reduce the
problem to proving that the integral operator $D$ on $L^2 ((0,1);xdx)$ with the
kernel 
\begin{equation}
 D(x,x') = \frac 1{x^2 + x'^2 } 
\end{equation}
is bounded. Consider the operator $W: L^2 ((0,1);xdx) \to L^2 (0,\infty)$, which
acts on  $f \in L^2 ((0,1);xdx)$ according to the rule 
$[Wf] (t) = e^{-t}f\bigl(e^{-t}\bigr)$. The operator $W$ has a
bounded inverse and $\|Wf\|=\|f\|$, which means that $W$ is unitary. 
The operator $WDW^{-1} : L^2 (0,\infty) \to L^2 (0,\infty)$ acts on $f(t)$ in
the following way 
\begin{equation}
 [WDW^{-1} f] (t) = \frac 12 \int_0^\infty \frac{f(t')dt'}{\cosh (t-t')} . 
\end{equation}
Applying the Young inequality \cite{11} we get 
\begin{equation}
 \|D\| = \|WDW^{-1}\| \leq \frac 12 \int_0^\infty \frac{dx}{\cosh x} =
\frac{\pi}4 . 
\end{equation}
Now let us write $C_2 (z)$ in the form 
\begin{gather}
 C_2 (z) = \chi_{(r, \infty)} (|p|) C_2 (z) \chi_{(r, \infty)} (|p|)+ \bigl\{ \chi_{[0, r]}(|p|) C_2 (z) \chi_{(r, \infty)}  (|p|)\nonumber \\
+ \chi_{(r, \infty)} (|p|)C_2 (z) \chi_{[0, r]} (|p|) + \chi_{[0, r]} (|p|) C_2 (z) \chi_{[0, r]} (|p|)\bigr\} . 
\end{gather}
On one hand, $\chi_{(r, \infty)} (|p|) C_2 (z) \chi_{(r, \infty)} (|p|) = \mathcal{O}_C (z)$ by Proposition~\ref{prop:1}. On the other hand, the norm of the 
terms in curly brackets can be made as small as pleased by choosing $r$ small enough (this easily follows from (\ref{34}) and the fact that $\tilde C$ is bounded). 
Hence, $ C_2 (z)  = \mathcal{O}_C (z)$. 
\end{proof}

\section{Spectral bounds for the remainder}\label{sec:4}

Suppose that $A_{1,2} \in \mathcal{C}(\mathcal{H})$ and $a_{1,2} >0$. Then the
distribution function of singular values satisfies the inequality 
\begin{equation}\label{25.08}
 \mathfrak{n}_\mu (A_1 + A_2 , a_1 + a_2) \leq \mathfrak{n}_\mu (A_1 , a_1 ) +
\mathfrak{n}_\mu (A_2 ,  a_2) . 
\end{equation}
The proof of (\ref{25.08}) can be found in \cite{birman09} (see page 245). Using inequalities
(1.4a), (1.4b) in \cite{17} one can easily show that 
\begin{gather}
\mathfrak{n}_\mu (A B, a) \leq  \mathfrak{n}_\mu (A , a\|B\|^{-1}) , \nonumber\\
\mathfrak{n}_\mu (BA, a) \leq  \mathfrak{n}_\mu (A , a\|B\|^{-1})  \label{09.94}
\end{gather}
for any bounded $B$ and $A \in \mathcal{C}(\mathcal{H})$. Following \cite{17} we shall
denote by $J_p$ normed trace ideals of compact operators, recall that the norm
of 
$A \in J_p$ is $\|A\|_p = \bigl( \sum_n \mu_n^p (A)\bigr)^{1/p}$. The trace
ideal $J_2$ is the family of Hilbert-Schmidt operators and $\|\cdot\|_2
\equiv\|\cdot\|_{HS}$. 
For $A \in J_p $, where $p \in [1,\infty)$ 
\begin{equation}\label{traceid}
 \mathfrak{n}_\mu (A , a ) = \mathfrak{n}_\mu (A^* , a ) \leq a^{-p} \|A\|_p^p . 
\end{equation}
Indeed, 
\begin{equation}
 \mathfrak{n}_\mu (A , a ) = \mathfrak{n}_\mu (a^{-1}A , 1 ) \leq \sum_n \mu_n^p
(a^{-1} A) = a^{-p} \|A\|_p^p. 
\end{equation}

Let us introduce the projection operator on $\mathcal{H}_A$ 
\begin{equation}
 \mathbb{P}_\pm (z) = \diag \left\{\mathcal{F}_1^{-1} \mathbb{P}_\pm^{(1)}
(z)\mathcal{F}_1, \mathcal{F}_2^{-1} \mathbb{P}_\pm^{(2)} (z) \mathcal{F}_2,
\mathcal{F}_3^{-1} \mathbb{P}_\pm^{(3)} (z)\mathcal{F}_3\right\}, 
\end{equation}
where $\mathbb{P}_\pm^{(k)} (z)$ act on $f(p_k, q_k)$ as follows 
\begin{equation}
  [\mathbb{P}_\pm^{(k)} (z) f ] (p_k , q_k) = \hat \varphi_\pm (\sqrt{z^2 +
q_k^2}; p_k) \int \hat \varphi_\pm^* (\sqrt{z^2 + q_k^2}; p'_k) f (p'_k , q_k)
dp'_k , 
\end{equation}
and $\hat \varphi_\pm (z; p)$ is the Fourier transform of $\varphi_\pm (z)$ in
(\ref{13}). Let us denote $\mathbb{P} (z) = \mathbb{P}_+ (z) + \mathbb{P}_- (z)$
and 
$\mathbb{Q} (z) = 1 - \mathbb{P} (z)$, and similarly $P = P_+ + P_-$, where $P_\pm $ were defined in (\ref{09:15}). 
Using the cutoff operator in (\ref{cutoff25}) we can
write 
the decomposition 
\begin{equation}
 \mathcal{A} (z) = \mathcal{\tilde A} (z) + \mathcal{R}_1 (z) + \mathcal{R}_1^*
(z) + \mathcal{R}_2 (z) + \mathcal{R}_3 (z) + \mathcal{R}_3^* (z) +
\mathcal{R}_4 (z) , 
\end{equation}
where 
\begin{gather}
 \mathcal{\tilde A} (z) = \mathfrak{X}_{[0, r_\varepsilon]} \mathbb{P} (z)
\mathcal{A} (z) \mathbb{P} (z) \mathfrak{X}_{[0, r_\varepsilon]} \nonumber\\
\mathcal{R}_1 (z) = \mathfrak{X}_{[0, r_\varepsilon]} \mathbb{Q} (z) \mathcal{A}
(z) \mathbb{P} (z) \mathfrak{X}_{[0, r_\varepsilon]} \nonumber\\
\mathcal{R}_2 (z) = \mathfrak{X}_{[0, r_\varepsilon]} \mathbb{Q} (z) \mathcal{A}
(z) \mathbb{Q} (z) \mathfrak{X}_{[0, r_\varepsilon]} \nonumber\\
\mathcal{R}_3 (z) =  \mathfrak{X}_{[0, r_\varepsilon]}\mathcal{A} (z) 
\mathfrak{X}_{(r_\varepsilon, \infty)} \nonumber\\
\mathcal{R}_4 (z) =  \mathfrak{X}_{(r_\varepsilon, \infty)} \mathcal{A} (z) 
\mathfrak{X}_{(r_\varepsilon, \infty)} \nonumber
\end{gather}
The decomposition (\ref{09.5}) holds true if we set 
\begin{equation}\label{09.22}
 \mathcal{R} (z) = \sum_{k=0}^2 \mathcal{R}_{2k} (z) + \sum_{k=1}^2
\bigl[\mathcal{R}_{2k-1} (z)+ \mathcal{R}_{2k-1}^* (z)\bigr] , 
\end{equation}
where by definition 
\begin{equation}\label{09.41}
 \mathcal{R}_{0} (z) = \mathcal{\tilde A} (z) - \mathcal{A}_0 (z) . 
\end{equation}
The proof of Theorem~\ref{th:09.3} is based on the following two lemmas 
\begin{lemma}\label{newlem:5}
For all $\varepsilon >0$ one can always fix $r_\varepsilon \in (0, 1/4)$ so that 
\begin{equation}
  \varlimsup_{z \to 0} |\ln|\ln z^2 | |^{-1} \mathfrak{n}_\mu (\mathcal{R}_i
(z), \varepsilon) < \varepsilon \quad \quad (i=0,1,2). \label{09.71}
\end{equation}
\end{lemma}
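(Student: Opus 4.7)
The plan is to reduce each bound in (\ref{09.71}) to a trace-ideal estimate via the inequalities (\ref{traceid}) and (\ref{09.94}). The overall strategy is to write $\mathcal{R}_i(z)$ as a product whose operator-norm factor is $O(r_\varepsilon^\alpha)$ for some $\alpha>0$, with the remaining trace-ideal factor scaling like $|\ln|\ln z^2||$ exactly as in Section~\ref{sec:3}. Then (\ref{09.94}) gives $\mathfrak{n}_\mu(\mathcal{R}_i,\varepsilon)/|\ln|\ln z^2|| \leq C r_\varepsilon^\alpha\varepsilon^{-p}$, which is smaller than $\varepsilon$ for $r_\varepsilon$ sufficiently small.

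A crucial preparatory identity, coming from the fiberwise commutation of $(1-M_d(z))^{-1/2}$ with $\mathbb{P}_\pm^{(k)}(z)$ and its action on the range of $\mathbb{P}(z)$ as multiplication by $(1-\mu(\sqrt{z^2+|q|^2}))^{-1/2}$, together with the support of $g_z$, is
\[
 \mathfrak{X}_{[0,r_\varepsilon]}\mathbb{P}(z)(1-M_d(z))^{-1/2} = \mathbb{P}(z)G(z),
\]
so $\tilde{\mathcal{A}}(z) = \mathbb{P}(z)G(z)M'(z)G(z)\mathbb{P}(z)$. By (\ref{09.81}), $(1-M_d(z))^{-1/2}\mathbb{Q}(z)$ is uniformly bounded in $z$. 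For $\mathcal{R}_2(z)$, both resolvent factors are uniformly bounded and $\|\mathfrak{X}_{[0,r_\varepsilon]}M'(z)\mathfrak{X}_{[0,r_\varepsilon]}\|_{HS} = O(r_\varepsilon)$ follows from the explicit kernel (\ref{22}), so $\|\mathcal{R}_2(z)\|_{HS}^2 = O(r_\varepsilon^2)$ uniformly in $z$, which is much stronger than required.

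For $\mathcal{R}_0 = \tilde{\mathcal{A}}(z)-\mathcal{A}_0(z)$, splitting $\mathbb{P}(z) = P+\Delta(z)$ produces
\[
 \mathcal{R}_0(z) = \Delta(z)G(z)M'(z)G(z)\mathbb{P}(z) + PG(z)M'(z)G(z)\Delta(z).
\]
By (\ref{14}) the fiberwise operator norm of $\Delta(z)$ at $|q|=q_0$ is $O((z^2+q_0^2)|\ln(z^2+q_0^2)|)$, so combined with (\ref{34}) one obtains $\|\Delta(z)G(z)\| \leq C r_\varepsilon|\ln r_\varepsilon|^{1/2}$. Applying (\ref{09.94}) to each cross term with this as the bounded $B$-factor, and noting that the remaining factor $G(z)M'(z)G(z)\mathbb{P}(z)$ shares the singular-value asymptotics of $\mathcal{A}_0(z)$, so that by the analysis of Section~\ref{sec:3} its count at level $s$ is $O(|\ln|\ln z^2||/s)$, yields the desired bound for $\mathcal{R}_0$.

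The principal obstacle is $\mathcal{R}_1(z)$, whose structure has the divergent factor $G(z)\mathbb{P}(z)$ on only one side while the opposite side carries only the bounded $(1-M_d(z))^{-1/2}\mathbb{Q}(z)$; a direct Cauchy-Schwarz estimate produces $\|\mathcal{R}_1\|_{HS}^2 = O(|\ln z^2|)$, which is insufficient. The plan is again to split $\mathbb{P}(z) = P+\Delta(z)$: the $\Delta$-contribution is controlled by the same argument as for $\mathcal{R}_0$. The remaining $P$-contribution $\mathfrak{X}_{[0,r_\varepsilon]}\mathbb{Q}(z)(1-M_d(z))^{-1/2}M'(z)G(z)P$ requires exploiting the fiberwise orthogonality $\mathbb{Q}(z)\mathbb{P}(z)=0$: by inserting $P = \mathbb{P}(z)-\Delta(z)$ at a strategic point and commuting $\mathbb{P}(z)$ through the fiberwise $(1-M_d(z))^{-1/2}$ factor, the leading piece reduces via orthogonality to a commutator with $M'(z)$ whose Hilbert-Schmidt norm is controlled by (\ref{22}) and by the vanishing $\hat\psi_\pm(0)=0$ from Lemma~\ref{lem:4}, while the remaining pieces inherit the smallness of $\Delta(z)$. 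Verifying that this combined cancellation yields a prefactor $C(r_\varepsilon)\to 0$ in front of the expected $|\ln|\ln z^2||$ scaling — rather than an $O(|\ln z^2|)$ blow-up — is the technical heart of Section~\ref{sec:4}.
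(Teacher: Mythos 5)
Your treatment of $\mathcal{R}_0(z)$ and $\mathcal{R}_2(z)$ follows essentially the route of the paper: you correctly identify the fiberwise identity $\mathfrak{X}_{[0,r_\varepsilon]}\mathbb{P}(z)(1-M_d(z))^{-1/2}=\mathbb{P}(z)G(z)$, split $\mathbb{P}(z)=P+\Delta(z)$, and use (\ref{traceid}) together with Hilbert--Schmidt bounds. (For $\mathcal{R}_2$ you should be slightly more careful: the claim $\|\mathfrak{X}_{[0,r_\varepsilon]}M'(z)\mathfrak{X}_{[0,r_\varepsilon]}\|_{HS}=O(r_\varepsilon)$ relies on the antisymmetrized kernel, i.e.\ on the $(1-p_{ik})$ factors and Lemma~\ref{lem:3}; the unsymmetrized kernel (\ref{22}) alone gives an HS integral that is logarithmically divergent as $z\to0$.)

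The genuine gap is your analysis of $\mathcal{R}_1(z)$. You claim a direct Cauchy--Schwarz bound gives $\|\mathcal{R}_1\|_{HS}^2 = O(|\ln z^2|)$, and on this basis you abandon the direct estimate in favor of a vague commutator/orthogonality scheme whose details you explicitly defer. In fact the direct estimate works: writing
\begin{equation*}
\mathcal{R}_1(z)=\bigl[\mathbb{Q}(z)(1-M_d(z))^{-1/2}\bigr]\bigl[\mathfrak{X}_{[0,r_\varepsilon]}M'(z)G(z)\mathbb{P}(z)\bigr]
\end{equation*}
and using $\|AB\|_{HS}\le\|A\|\,\|B\|_{HS}$ with (\ref{09.81}) and Lemma~\ref{lem:2} (Eq.~(\ref{09.02})) yields $\|\mathcal{R}_1(z)\|_{HS}^2\le c\,r_\varepsilon^2\,|\ln|\ln z^2||$, which via (\ref{traceid}) gives precisely the needed bound. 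Your estimate $O(|\ln z^2|)$ comes from underestimating the suppression in $g_z^2$: by (\ref{34}) one has $\int g_z^2(|p|)\,d^2p\sim |\ln|\ln z^2||$, not $|\ln z^2|$, because $g_z^2(s)$ carries an extra $|\ln(s^2+z^2)|^{-1}$ in the denominator. There is no obstacle with a single $G$-factor on one side; the decay of the antisymmetrized potential factors (via Lemma~\ref{lem:3}) already produces the $r_\varepsilon^2$ in Lemma~\ref{lem:2}. Your proposed alternative route (inserting $P=\mathbb{P}(z)-\Delta(z)$, commuting through $(1-M_d)^{-1/2}$, invoking cancellations from $\hat\psi_\pm(0)=0$) is not carried through to an estimate and, as stated, does not constitute a proof.
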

\begin{lemma}\label{newlem:6}
For any fixed $r_\varepsilon \in (0, 1/4)$ and $\varepsilon >0$  
\begin{equation}
  \varlimsup_{z \to 0} |\ln|\ln z^2 | |^{-1} \mathfrak{n}_\mu (\mathcal{R}_i
(z), \varepsilon) =0 \quad \quad (i=3,4). \label{09.72}
\end{equation}
\end{lemma}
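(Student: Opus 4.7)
The plan is to show that $\mathfrak{n}_\mu(\mathcal{R}_i(z),\varepsilon)$ grows at most as $o(|\ln|\ln z^2||)$ as $z\to 0^+$ for $i\in\{3,4\}$; since $|\ln|\ln z^2||\to\infty$, the claim follows. The key structural observation, which I will use throughout, is that $\mathfrak{X}_\Omega$ commutes with $(1-M_d(z))^{-1/2}$: both act as multipliers in the variables $|q_\alpha|$ after partial Fourier transform in $y_\alpha$. Moreover, on $\mathrm{Ran}\,\mathfrak{X}_{(r_\varepsilon,\infty)}$ the diagonal entry $M_{\alpha\alpha}(z)$ is unitarily the fibered two-body BS operator $k(\sqrt{q_\alpha^2+z^2})$ restricted to $|q_\alpha|>r_\varepsilon$, so by the monotonicity in (\ref{09.81}) its norm is at most $\mu(r_\varepsilon)<1$ uniformly in $z\in(0,r_\varepsilon]$. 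Hence $\|(1-M_d(z))^{-1/2}\mathfrak{X}_{(r_\varepsilon,\infty)}\|\leq C_\varepsilon:=(1-\mu(r_\varepsilon))^{-1/2}$ uniformly in $z$, which tames the operator on the exterior region.

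For $i=4$, combining this bound with (\ref{09.94}) reduces the problem to controlling $\mathfrak{n}_\mu\bigl(\mathfrak{X}_{(r_\varepsilon,\infty)}M'(z)\mathfrak{X}_{(r_\varepsilon,\infty)},\varepsilon/C_\varepsilon^2\bigr)$. I introduce an auxiliary large cutoff $R$ and split $\mathfrak{X}_{(r_\varepsilon,\infty)}=\mathfrak{X}_{(r_\varepsilon,R]}+\mathfrak{X}_{(R,\infty)}$. The three tail pieces carrying $\mathfrak{X}_{(R,\infty)}$ on at least one side have operator norm $O(1/R^2)$ uniformly in $z$, since the denominator $(2q_1'+q_1)^2+3q_1^2+3z^2$ in the kernel (\ref{22}) is bounded below by a positive multiple of $R^2$ whenever $|q_1|>R$ or $|q_1'|>R$; taking $R$ large therefore makes these contribute zero to $\mathfrak{n}_\mu$. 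The middle piece $\mathfrak{X}_{(r_\varepsilon,R]}M'(z)\mathfrak{X}_{(r_\varepsilon,R]}$ has Hilbert-Schmidt norm bounded uniformly in $z$: integrating $p_1,p_1'$ via Plancherel yields the finite factor $\|v\|_1^2$, while the remaining $q_1,q_1'$-integration is over a bounded annulus where the denominator has a uniform positive lower bound. The inequality (\ref{traceid}) then gives a $z$-independent bound on $\mathfrak{n}_\mu$, and dividing by $|\ln|\ln z^2||$ yields the limit $0$.

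For $i=3$, the right factor again provides a bounded $C_\varepsilon$, but the left factor $\mathfrak{X}_{[0,r_\varepsilon]}(1-M_d)^{-1/2}$ diverges as $z\to 0$. I split it as $\mathfrak{X}_{[0,r_\varepsilon]}\mathbb{P}(z)(1-M_d)^{-1/2}+\mathfrak{X}_{[0,r_\varepsilon]}\mathbb{Q}(z)(1-M_d)^{-1/2}$: on $\mathrm{Ran}\,\mathbb{Q}(z)$ the eigenvalue of $M_d(z)$ approaching $1$ is projected out, so the factor is uniformly bounded and the $\mathbb{Q}$-piece reduces to an $\mathcal{R}_4$-type argument. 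On $\mathrm{Ran}\,\mathbb{P}(z)$ I integrate out the $p$-variables as in Section~\ref{sec:3}, reducing the object to an integral operator in $q,q'$ supported in the $\ell=\pm1$ angular channels; its radial kernel carries the divergent weight $g_z(|q|)$ on the left, while the denominator $q^2+q'^2+q\cdot q'+z^2$ is bounded below by a positive multiple of $r_\varepsilon^2$ since $|q'|>r_\varepsilon$, and the $L^2$-membership of $\psi_0$ (Lemma~\ref{lem:4}) gives integrability at large $|q'|$. A direct Hilbert-Schmidt bound yields $\|\cdot\|_{HS}^2=O(|\ln|\ln z^2||)$, which is only borderline; I upgrade this to $\mathfrak{n}_\mu=O(\sqrt{|\ln|\ln z^2||})$ by factoring the kernel through an auxiliary $L^2$-space so that the divergent weight $g_z$ lives in only one of two Hilbert-Schmidt factors, and then combining $\|UV\|_1\leq\|U\|_2\|V\|_2$ with $\mathfrak{n}_\mu(\cdot,\varepsilon)\leq\varepsilon^{-1}\|\cdot\|_1$ (a special case of (\ref{traceid})).

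The main obstacle is precisely this last trace-class refinement for the $\mathbb{P}$-part of $\mathcal{R}_3$: the natural Hilbert-Schmidt estimate matches the diverging scale $|\ln|\ln z^2||$ exactly, so the proof cannot succeed without exploiting the asymmetry of the kernel (divergent weight only on one side, $L^2$-decay of $\psi_0$ on the other) to redistribute half of the HS mass into a bounded Schatten factor. Arranging this factorization cleanly, given that the multiplier $g_z$ is not itself Hilbert-Schmidt, requires either absorption into a convolution-type kernel or interpolation within the $J_p$ classes for $p>2$.
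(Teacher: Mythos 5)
Your treatment of $i=4$ and of the $\mathbb{Q}$- and far-$R$-pieces of $\mathcal{R}_3$ is sound and runs close to the paper (for $i=4$ the paper instead observes that $\mathfrak{X}_{(r_\varepsilon,\infty)}M'(z)$ is norm--Lipschitz in $z^2$ and so converges to a fixed compact operator, giving a $z$-uniform bound on $\mathfrak{n}_\mu$; your annulus split achieves the same). The genuine gap is the $\mathbb{P}$-part of $\mathcal{R}_3$, i.e. the piece $\mathfrak{X}_{[0,r_\varepsilon]}\mathbb{P}(z)\mathcal{A}(z)\mathfrak{X}_{(r_\varepsilon,R]}$, which is the heart of the lemma. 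You correctly diagnose that the Hilbert--Schmidt estimate is only $O(|\ln|\ln z^2||)$ and hence borderline, and you propose to upgrade to a trace bound of order $|\ln|\ln z^2||^{1/2}$ by factoring $\mathcal{K}=UV$ through an auxiliary space with $g_z$ concentrated in a single Hilbert--Schmidt factor. But you then explicitly concede that carrying out this factorization is unresolved (``requires either absorption into a convolution-type kernel or interpolation within the $J_p$ classes for $p>2$''). That is precisely the missing step; the proposal does not establish (\ref{09.72}) for $i=3$.

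The paper closes this gap by a different mechanism. It does \emph{not} first reduce to a two-dimensional $q,q'$-kernel by integrating out $p$; it keeps the operator in its full $L^2(\mathbb{R}^4)$ form $\mathcal{K}(z)=v_1\mathcal{F}_1^{-1}g_z(|q_1|)\mathcal{F}_1\mathcal{F}_2^{-1}(p_2^2+q_2^2+z^2)^{-1}\chi_{(r_\varepsilon,R]}(|q_2|)\mathcal{F}_2 v_2$ and splits by a \emph{position-space} cutoff $\chi_{[0,R']}(|x_1|)$. The near piece has the $f(x)g(-i\nabla)$ structure needed for the Kato--Seiler--Simon bound (\ref{traceidreq}); the $(1+|x|^2)^3$-weight there is finite precisely because of the cutoff, and the resulting trace norm is $O(|\ln|\ln z^2||^{1/2})$. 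The far piece is estimated only in Hilbert--Schmidt norm and still gives $O(|\ln|\ln z^2||)$, but with a prefactor $d(R')\to 0$ as $R'\to\infty$; one takes $\varlimsup_{z\to 0}$ first and then sends $R'\to\infty$. So the $|\ln|\ln z^2||^{1/2}$ gain comes from the weighted trace-ideal inequality on the 4D space rather than from splitting $g_z$ across two Hilbert--Schmidt factors, and the 2D reduction your outline performs first discards exactly the $f(x)g(-i\nabla)$ structure that this argument relies on.
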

\begin{proof}[Proof of Theorem~\ref{th:09.3}]
 By (\ref{09.22}) and (\ref{25.08}) 
\begin{equation}
 \mathfrak{n}_\mu (\mathcal{R} (z), \varepsilon) \leq \sum_{k=0}^2 
\mathfrak{n}_\mu (\mathcal{R}_{2k} (z), \varepsilon/7) + \sum_{k=1}^2 2
\mathfrak{n}_\mu (\mathcal{R}_{2k-1} (z), \varepsilon/7) . 
\end{equation}
Let us fix $r_\varepsilon$ as in Lemma~\ref{newlem:5}. Then by Lemmas~\ref{newlem:5}, \ref{newlem:6} 
\begin{equation}
\varlimsup_{z \to 0} |\ln|\ln z^2 | |^{-1}  \mathfrak{n}_\mu (\mathcal{R} (z),
\varepsilon)< 4 \varepsilon/7 . 
\end{equation}
\end{proof}

We shall need the following estimates of the Hilbert-Schmidt operator norms  
\begin{lemma}\label{lem:2}
For $z \in (0, r_\varepsilon ]$ and $R>1$ there is $c > 0$ such that 
\begin{gather}
 \left\|\mathfrak{X}_{[0,r_\varepsilon]} M' (z) G (z) \right\|_{HS}^2 \leq c r_\varepsilon^2 |\ln|\ln z^2 || , \label{09.02}\\
 \left\|\mathfrak{X}_{[0,r_\varepsilon]} \mathbb{P}(z) \mathcal{A}(z) \mathfrak{X}_{(R, \infty)}  \right\|_{HS}^2 \leq c R^{-2} |\ln|\ln z^2 || . \label{09.03}
\end{gather}
\end{lemma}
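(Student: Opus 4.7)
The plan is to compute both Hilbert-Schmidt norms directly from the kernel formula (\ref{22}) (and its analogues for the five other off-diagonal blocks of $M'$), but for the first estimate one must exploit the antisymmetry built into $\mathcal{H}_A$: a naive estimate on $\|\mathfrak{X}_{[0,r_\varepsilon]} M'(z) G(z)\|_{HS}^2$ that ignores antisymmetry would diverge like $z^{-2}|\ln z^2|^{-1}$, which is far too large.

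For (\ref{09.02}), the key observation is that every $\phi_\alpha \in \mathcal{H}_A$ is odd in $x_\alpha$ (from the relations $p_{\beta\gamma}\phi_\alpha = -\phi_\alpha$ of Table~\ref{tab:1}), hence $\hat\phi_\alpha$ is odd in $p_\alpha$. Since $\mathfrak{X}_{[0,r_\varepsilon]}$ and $G(z)$ commute with the partial antisymmetrizers $\Pi_a^{(\alpha)}$ acting on $x_\alpha$, the HS norm on $\mathcal{H}_A$ is controlled by that of $\Pi_a M'(z)\Pi_a$ sandwiched by the same cutoffs. Using radiality of $\widehat{|v|^{1/2}}$, antisymmetrizing (\ref{22}) in both $p_1$ and $p_1'$ replaces the kernel by
\begin{equation*}
K_{as} = \frac{[\widehat{|v|^{1/2}}(p_1+c_1) - \widehat{|v|^{1/2}}(p_1-c_1)]\,[\widehat{|v|^{1/2}}(p_1'-c_2) - \widehat{|v|^{1/2}}(p_1'+c_2)]}{4\pi^2 D},
\end{equation*}
where $c_1 = \tfrac{1}{\sqrt 3}q_1 + \tfrac{2}{\sqrt 3}q_1'$, $c_2 = \tfrac{2}{\sqrt 3}q_1 + \tfrac{1}{\sqrt 3}q_1'$, and $D = (2q_1'+q_1)^2 + 3q_1^2 + 3z^2$. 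The fundamental theorem of calculus and Cauchy-Schwarz give $\int |\widehat{|v|^{1/2}}(p+c) - \widehat{|v|^{1/2}}(p-c)|^2\,dp \leq 4|c|^2 \|\nabla\widehat{|v|^{1/2}}\|_2^2$, and $\|\nabla\widehat{|v|^{1/2}}\|_2^2 = \norm{|x|\,|v|^{1/2}}_2^2$ is finite by exponential decay of $v$. Hence integration of $|K_{as}|^2$ over $p_1, p_1'$ yields $C|c_1|^2 |c_2|^2/D^2$. The elementary bounds $|c_i|^2 \leq \tfrac{8}{3}(|q_1|^2 + |q_1'|^2)$ (triangle inequality) and $D \geq 3(|q_1|^2 + |q_1'|^2 + z^2)$ (completing the square) make this ratio uniformly bounded by a universal constant. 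Summing over the six off-diagonal blocks and carrying out the remaining $q$-integration,
\begin{equation*}
\|\mathfrak{X}_{[0,r_\varepsilon]} M'(z) G(z)\|_{HS}^2 \leq C\int_{|q_1|, |q_1'|\leq r_\varepsilon} g_z^2(|q_1'|)\,dq_1\,dq_1' \leq C' r_\varepsilon^2 \int_0^{r_\varepsilon} t\,g_z^2(t)\,dt \leq C''\,r_\varepsilon^2\,|\ln|\ln z^2||,
\end{equation*}
where the last step uses (\ref{34}) via $u = t^2 + z^2$.

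For (\ref{09.03}) antisymmetrization is not needed. Since $|q_1'|>R\geq 1 > r_\varepsilon \geq |q_1|$, one has $(2q_1'+q_1)^2 \geq (2|q_1'|-|q_1|)^2 \geq |q_1'|^2$, so $D \geq c|q_1'|^2$ and the $p$-integrated kernel squared is bounded by $C/|q_1'|^4$. On the top eigenspace, $\mathbb{P}(z)(1-M_d(z))^{-1/2}$ restricted to $|q_1|\leq r_\varepsilon$ reduces by functional calculus to $g_z(|q_1|)\mathbb{P}(z)$, while on the right $(1-M_d(z))^{-1/2}\mathfrak{X}_{(R,\infty)}$ is uniformly bounded since $\mu(\sqrt{q^2+z^2})$ is bounded away from $1$ for $|q|\geq 1$. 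Consequently,
\begin{equation*}
\|\mathfrak{X}_{[0,r_\varepsilon]}\mathbb{P}(z)\mathcal{A}(z)\mathfrak{X}_{(R,\infty)}\|_{HS}^2 \leq C\bigg(\int_{|q_1|\leq r_\varepsilon} g_z^2(|q_1|)\,dq_1\bigg)\bigg(\int_{|q_1'|>R}\frac{dq_1'}{|q_1'|^4}\bigg) \leq C'\,R^{-2}\,|\ln|\ln z^2||.
\end{equation*}
The main obstacle lies in (\ref{09.02}): one must both detect the necessity of antisymmetrization and set up the derivative-of-Fourier-transform estimate that converts the otherwise divergent $|c_1|^2|c_2|^2/D^2$ singularity at the origin into a universally bounded factor, isolating the $g_z^2$-integral as the only source of growth in $z$.
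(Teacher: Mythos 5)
Your proof is correct and follows essentially the same route as the paper: antisymmetrize the kernel in both $p$-variables, extract the factor $|c_1||c_2|$ from the cancellation in $\widehat{|v|^{1/2}}$ (you via an FTC/gradient estimate, the paper via Lemma~\ref{lem:3} with $\sin(a\cdot x)$ — equivalent devices both resting on $\|\,|x|\,|v|^{1/2}\|_2<\infty$), bound $|c_1|^2|c_2|^2/D^2$ uniformly and integrate $g_z^2$ via (\ref{34}); and for (\ref{09.03}) observe that no antisymmetrization gain is needed since $D^{-2}\lesssim |q_1'|^{-4}$ for $|q_1'|>R$ already supplies the $R^{-2}$. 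One small slip: the inequality $D\geq 3(|q_1|^2+|q_1'|^2+z^2)$ fails (take $q_1'=-q_1/2$, which gives $D=3|q_1|^2+3z^2$); the correct lower bound is $D\geq 2(|q_1|^2+|q_1'|^2)+3z^2$, which still renders $|c_1|^2|c_2|^2/D^2$ uniformly bounded and leaves the conclusion intact.
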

Before proving Lemma~\ref{lem:2} let us prove the following trivial bound 
\begin{lemma}\label{lem:3}
 Suppose that $v_0: \mathbb{R}^2 \to \mathbb{R}_+$ is Borel and $|v_0 (x)| \leq
\alpha_1 e^{-\alpha_2 |x|}$, where $\alpha_{1,2}$ are constants. 
Then its Fourier transform $\hat v_0  (p)$ for any 
$a \in \mathbb{R}^2$  satisfies the inequality 
\begin{equation}
 \left| \hat v_0 (p + a) - \hat v_0 (p-a)\right| \leq \left[\min (1, |a|)\right]
|\hat f_a (p)| , 
\end{equation}
where $\hat f_a (p)$ is the Fourier-transform of $f_a \in L^2 (\mathbb{R}^2)$
and $\sup_a \|f_a\| < \infty$. 
\end{lemma}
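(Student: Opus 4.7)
The plan is to represent the difference as a single Fourier transform and then split according to the size of $|a|$. Starting from the definition,
\begin{equation*}
 \hat v_0(p+a) - \hat v_0(p-a) = \frac{1}{2\pi} \int e^{-ip\cdot x}\bigl(e^{-ia\cdot x} - e^{ia\cdot x}\bigr) v_0(x)\, d^2x = \widehat{g_a}(p),
\end{equation*}
where $g_a(x) := -2i \sin(a\cdot x)\, v_0(x)$. Since taking the Fourier transform is an isometry on $L^2(\mathbb{R}^2)$, the task reduces to factoring out either $|a|$ or $1$ from $g_a$ so that what remains is in $L^2$ with norm bounded uniformly in $a$.

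I would treat the two regimes of $|a|$ separately, using the elementary bound $|\sin t|\leq \min(1,|t|)$. In the regime $|a|\leq 1$, where $\min(1,|a|)=|a|$, I set
\begin{equation*}
 f_a(x) := \frac{-2i\,\sin(a\cdot x)}{|a|}\, v_0(x) \quad (\text{and } f_a \equiv 0 \text{ if } a=0),
\end{equation*}
so that $\hat g_a(p) = |a|\,\hat f_a(p)$. Using $|\sin(a\cdot x)|\leq |a|\,|x|$ one gets $|f_a(x)|\leq 2|x|\,v_0(x)\leq 2\alpha_1 |x|\,e^{-\alpha_2 |x|}$, whose $L^2$-norm is finite and independent of $a$. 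In the regime $|a|>1$, where $\min(1,|a|)=1$, I set $f_a(x) := g_a(x) = -2i\sin(a\cdot x)\,v_0(x)$, so that $|f_a(x)|\leq 2v_0(x)\leq 2\alpha_1 e^{-\alpha_2|x|}$ with the same kind of uniform $L^2$ estimate.

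Combining the two cases, in each case $\hat v_0(p+a) - \hat v_0(p-a) = \min(1,|a|)\,\hat f_a(p)$ with $\sup_a \|f_a\|_2 \leq C(\alpha_1,\alpha_2) < \infty$, which is exactly the claim. There is essentially no obstacle here: the only input beyond the pointwise inequality $|\sin t|\leq\min(1,|t|)$ is the exponential decay hypothesis on $v_0$, which guarantees that both $v_0$ and $|x|\,v_0(x)$ are square-integrable, giving the uniform bound on $\|f_a\|_2$. The argument is a purely computational lemma that will be invoked later to control oscillatory differences of Fourier-transformed potentials appearing in the error terms $\mathcal{R}_i(z)$.
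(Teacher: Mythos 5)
Your proposal is correct and follows essentially the same route as the paper: both define $f_a(x) = -2\,[\min(1,|a|)]^{-1}\sin(a\cdot x)\,v_0(x)$ (up to the inessential factor $i$), reduce the claim to the pointwise inequality $|\sin t|\leq\min(1,|t|)$, and use the exponential decay of $v_0$ to get a uniform $L^2$ bound on $f_a$. The only cosmetic difference is that the paper keeps $f_a$ in a single formula and bounds $[\min(1,|a|)]^{-1}|\sin(a\cdot x)|\leq 1+|x|$ directly, whereas you split into the regimes $|a|\leq 1$ and $|a|>1$; this changes nothing substantive.
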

\begin{proof}
By definition of the Fourier transform
\begin{equation}
 \hat v_0 (p + a) - \hat v_0 (p-a)= \frac 1{\pi} \int e^{-i(p \cdot x)}
\left[e^{-i(a \cdot x)} - e^{i(a \cdot x)} \right] v_0 (x) d^2 x = [\min (1,
|a|)] \hat f_a ( p) , 
\end{equation}
where 
\begin{equation}\label{38}
f_a (x) := -2[\min (1, |a|)]^{-1}\sin (a\cdot x) e^{- \frac{\alpha_2}2  |x|}
\left[e^{\frac{\alpha_2}2  |x|} v_0 (x)\right] . 
\end{equation}
Note that 
\begin{equation}
 \sup_{a \neq 0} \left\|[\min (1, |a|)]^{-1}\sin (a\cdot x)  e^{-
\frac{\alpha_2}2  |x|}\right\|_\infty \leq \left\|(1+|x|)e^{- \frac{\alpha_2}2 
|x|}\right\|_\infty < \infty. 
\end{equation}
The Lemma is proved because the norm of the function in square brackets in
(\ref{38}) is finite. 
\end{proof}
\begin{proof}[Proof of Lemma~\ref{lem:2}]
 Let us start with (\ref{09.02}).  Without loss of generality and in view of antisymmetry relations
(Table~\ref{tab:1}) it is enough to prove that 
\begin{equation}\label{40}
 \left\|(1-p_{23}) \mathcal{F}_1^{-1}\chi_{[0,r_\varepsilon]} (|q_1 |)\mathcal{F}_1 M_{12} (z) (1-p_{13})
\mathcal{F}_2^{-1}g_z (|q_2 |)\mathcal{F}_2 \right\|^2_{HS} \leq c r_\varepsilon^2 
|\ln|\ln z^2 || , 
\end{equation}
for some $c >0$, where the operator in (\ref{40}) is considered on $L^2
(\mathbb{R}^4)$. 
After applying the appropriate Fourier transform the operator in (\ref{40}) has
the following integral kernel (c.f. eq. (\ref{22})) 
\begin{gather}
 K (p, q; p', q' ) = \frac 1{\pi^2} \chi_{[0,r_\varepsilon]} (|q |)  \frac {g_z (|q'|)}{(2q' + q)^2 + 3q^2  + 3
z^2}\nonumber\\
\times \left[ \widehat{|v|^{\frac 12 }} \Bigl(p + \frac{2}{\sqrt 3} q' +
\frac{1}{\sqrt 3} q\Bigr) - \widehat{|v|^{\frac 12 }} \Bigl(p - \frac{2}{\sqrt
3} q' - \frac{1}{\sqrt 3} q\Bigr)\right] \nonumber\\
\times \left[\widehat{|v|^{\frac 12 }} \Bigl(p' - \frac{1}{\sqrt 3} q' -
\frac{2}{\sqrt 3}  q \Bigr) - \widehat{|v|^{\frac 12 }} \Bigl(p' +
\frac{1}{\sqrt 3} q' + \frac{2}{\sqrt 3}  q \Bigr)\right].
\end{gather}
Thus by Lemma~\ref{lem:3}
\begin{gather}
 \int \left| K (p, q; p', q' )\right|^2 d^2 p d^2 q d^2 p' d^2 q' \nonumber\\
\leq c' \int \chi_{[0,r_\varepsilon]} (|q |)  g^2_z (|q'|) \frac{|2q'+q|^2|q'+2q|^2}{[(2q' + q)^2 + 3q^2 
+ 3 z^2]^2} d^2 q d^2 q' \leq c' (\alpha')^2 r_\varepsilon^2 \int g^2_z (|p|) d^2p , \label{09.32}
\end{gather}
where $c', \alpha'$ are constants and 
\begin{equation}
 \alpha' = \sup_{q', q \in \mathbb{R}^2} \frac{|2q'+q||q'+2q|}{(2q' + q)^2 + 3q^2  +
3 z^2} < \infty. 
\end{equation}
On account of (\ref{34}) for $z \leq r_\varepsilon$
\begin{equation}\label{09.31}
 \int g^2_z (|p|) d^2p \leq (2\pi)\delta \int_0^{r_\varepsilon} \frac{tdt}{(t^2
+z^2)|\ln (t^2 + z^2)|} = \pi  \delta \left[ \bigl|\ln|\ln z^2|\bigr| - \bigl|\ln|\ln (z^2 + r_\varepsilon^2)|\bigr| \right]
\end{equation}
Substituting (\ref{09.31}) into (\ref{09.32}) we prove (\ref{09.02}). Now let us consider (\ref{09.03}). 
\begin{gather}
  \left\|\mathfrak{X}_{[0,r_\varepsilon]} \mathbb{P}(z) \mathcal{A}(z) \mathfrak{X}_{(R, \infty)}  \right\|_{HS}^2 = 
\left\|\mathbb{P}(z)  G(z) M' (z) \mathfrak{X}_{(R, \infty)}  \bigl(1- M_d (z)\bigr)^{-1/2}\right\|_{HS}^2 \nonumber \\
\leq c' \left\|G(z) M' (z) \mathfrak{X}_{(R, \infty)}  \right\|_{HS}^2 , 
\end{gather}
where 
\begin{equation}
 c' = \sup_{z \in (0, r_\varepsilon]} \left\| \bigl(1- M_d (z)\bigr)^{-1/2} \mathfrak{X}_{(R, \infty)} \right\|^2 < \infty . 
\end{equation}
Hence, (\ref{09.03}) would follow if we could prove that 
\begin{equation}
  \left\|\mathfrak{X}_{(R, \infty)} M' (z)  G(z) \right\|_{HS}^2 \leq c R^{-2} |\ln|\ln z^2 ||
\end{equation}
for some $c >0$ (in the last equation we have used that $\|A\|_{HS}= \|A^*\|_{HS}$ for $A \in \mathcal{C}(\mathcal{H})$). Again without loss of generality 
the problem reduces to proving that 
\begin{equation}
 \left\|(1-p_{23}) \mathcal{F}_1^{-1}\chi_{(R, \infty)} (|q_1 |)\mathcal{F}_1 M_{12} (z) (1-p_{13})
\mathcal{F}_2^{-1}g_z (|q_2 |)\mathcal{F}_2 \right\|^2_{HS} \leq c R^{-2}
|\ln|\ln z^2 || , \label{09.37}
\end{equation}
for some $c>0$. Let us denote by $\tilde K(p,q;p', q')$ the integral kernel of the operator on the lhs of (\ref{09.37}). By Lemma 8 
we obtain 
\begin{gather}
 \int |\tilde K(p,q;p', q')| d^2 p d^2 q d^2 p' d^2 q' \nonumber\\\leq
\tilde c \int_{|q| > R} \frac{d^2 q}{[q^2 + z^2]^2} \int g_z^2 (|q'|) d^2 q' \leq c R^{-2} |\ln|\ln z^2 ||, 
\end{gather}
where $c, \tilde c >0$ are constants. 
\end{proof}
\begin{proof}[Proof of Lemma~\ref{newlem:5}]
Let us first consider $\mathcal{R}_0 (z)$, which according to (\ref{09.41}) can be rewritten
as 
\begin{equation}
 \mathcal{R}_0 (z) = \mathbb{P}(z) G(z) M' (z) G(z) \mathbb{P}(z) - PG(z) M'(z) G(z) P. 
\end{equation}
By (\ref{25.08}) without loosing generality it suffices to prove that for
any $\varepsilon >0$ one can choose $r_\varepsilon$ so that 
\begin{equation}
   \varlimsup_{z \to 0} |\ln|\ln z^2 | |^{-1} \mathfrak{n}_\mu
\Bigl(\bigl[\mathbb{P}(z) -P\bigr]G(z) M' (z) G(z) \mathbb{P}(z) ,
\varepsilon\Bigr) < \varepsilon/3. 
\end{equation}
Using the upper bound (\ref{traceid}) we get 
\begin{gather}
 \mathfrak{n}_\mu \Bigl(\bigl[\mathbb{P}(z) -P\bigr]G(z) M' (z) G(z)
\mathbb{P}(z) , \varepsilon\Bigr) \nonumber\\
\leq \varepsilon^{-2} \left\|\bigl[\mathbb{P}(z) -P\bigr]G(z) M' (z) G(z)
\mathbb{P}(z)\right\|_{HS}^2 \nonumber\\\leq 
\varepsilon^{-2} \Bigl\|[\mathbb{P}(z) -P]G(z) \Bigr\|^2
\left\|\mathfrak{X}_{[0, r_\varepsilon ]}M' (z) G(z)\right\|_{HS}^2  \label{09.74.aaa}
\end{gather}
Note that from (\ref{14}) it follows that $\|\varphi_\pm (z) - \eta_\pm \| \leq c z^2
|\ln z|$ for some $c >0$. Together with (\ref{9.61})-(\ref{34}) this gives 
\begin{equation}
 \varlimsup_{z \to 0} \bigl\|[\mathbb{P}(z) -P]G(z) \bigr\| \leq c'
r_\varepsilon , \label{09.74}
\end{equation}
where $c' >0$ is a constant. Now (\ref{09.71}) for $i=0$ follows from (\ref{09.74.aaa}), (\ref{09.74}) and Lemma~\ref{lem:2} if we choose
$r_\varepsilon$ small enough. Now let us prove (\ref{09.71}) for $i=1$. By (\ref{traceid}) and Lemma~\ref{lem:2} 
\begin{gather}
  \mathfrak{n}_\mu ( \mathcal{R}_1 (z), \varepsilon) \leq \varepsilon^{-2}
\left\|  \mathbb{Q}(z)\bigl(1 - M_d (z)\bigr)^{-\frac 12} \right\|^2 
\left\|\mathfrak{X}_{[0, r_\varepsilon ]} M' (z) G(z)\right\|_{HS}^2 \nonumber\\
\leq c_0 \varepsilon^{-2}r_\varepsilon^2 \left\|  \mathbb{Q}(z)\bigl(1 - M_d
(z)\bigr)^{-\frac 12} \right\|^2 |\ln |\ln z^2|| . 
\end{gather}
From (\ref{09.81}) it follows that 
\begin{equation}
\varlimsup_{z\to 0} \left\|  \mathbb{Q}(z)\bigl(1 - M_d (z)\bigr)^{-\frac 12}
\right\| < \infty 
\end{equation}
and thus (\ref{09.71}) holds true for $i=1$ if $r_\varepsilon$ is chosen small enough. The
proof of (\ref{09.71}) for $i=2$ is done analogously. 
\end{proof}
The proof of the next lemma is based on the following fact, which is proved on
page 40 in \cite{17}. Consider the operator $f(x)g(-i\nabla)$ acting on $L^2
(\mathbb{R}^4)$, where 
$f,g \in L^2_3$ (for notations see \cite{17}). Then 
\begin{equation}
 \|f(x)g(-i\nabla)\|_1 \leq C \left\{\int (1+|x|^2)^3 |f(x)|^2 d^4
x\right\}^{\frac 12}\left\{\int (1+|x|^2)^3 |g(x)|^2 d^4 x\right\}^{\frac 12} , \label{traceidreq}
\end{equation}
where the constant $C$ does not depend on $f,g$. 

\begin{proof}[Proof of Lemma~\ref{newlem:6}]
Let us first consider (\ref{09.72}) for the case when $i = 4$. On one hand, by (\ref{09.94}) 
\begin{equation}
\mathfrak{n}_\mu (\mathcal{R}_4 (z) , \varepsilon ) \leq  \mathfrak{n}_\mu \bigl( \mathfrak{X}_{(r_\varepsilon , \infty)} M'(z) , c^{-1} \varepsilon \bigr) , \label{09.106}
\end{equation}
where 
\begin{equation}
 c= \sup_{z\in (0, r_\varepsilon ]} \Bigl\|\bigl(1 - M_d (z)\bigr)^{-1/2} \mathfrak{X}_{(r_\varepsilon , \infty)} \Bigr\|^2  < \infty . 
\end{equation}
On the other hand there is a constant $c' >0$ such that 
\begin{equation}
 \bigl\|\mathfrak{X}_{(r_\varepsilon , \infty)} (M'(z) - M'(z')\bigr\|\leq c' |z^2 -z'^2|. \label{09.99}
\end{equation}
Eq. (\ref{09.99}) follows from (\ref{09.100}) after the applying the resolvent identity. From (\ref{09.99}) it follows that the operators $\mathfrak{X}_{(r_\varepsilon , \infty)} M'(z)$ form a 
Cauchy sequence for $z \to 0$ and converge in norm to a compact operator. Hence, the lhs of (\ref{09.106}) is bounded by a constant for $z\in (0, r_\varepsilon]$ and 
(\ref{09.72}) for $i = 4$ is proved. Now let us consider (\ref{09.72}) for $i=3$. By (\ref{25.08}) 
\begin{equation}
\mathfrak{n}_\mu (\mathcal{R}_3 (z), \varepsilon) \leq \mathfrak{n}_\mu
(\mathcal{R}_3^{(1)} (z), \varepsilon/3) + 
\mathfrak{n}_\mu (\mathcal{R}_3^{2)} (z), \varepsilon/3) + 
\mathfrak{n}_\mu (\mathcal{R}_3^{(3)} (z), \varepsilon/3), 
\end{equation}
where 
\begin{gather}
 \mathcal{R}_3^{(1)} (z) = \mathfrak{X}_{[0, r_\varepsilon]} \mathbb{P}(z)
\mathcal{A} (z) \mathfrak{X}_{(r_\varepsilon , R]} \nonumber \\
\mathcal{R}_3^{(2)} (z) = \mathfrak{X}_{[0, r_\varepsilon]} \mathbb{Q}(z)
\mathcal{A} (z) \mathfrak{X}_{(r_\varepsilon , \infty)} \nonumber \\
\mathcal{R}_3^{(3)} (z) = \mathfrak{X}_{[0, r_\varepsilon]}  \mathbb{P}(z) \mathcal{A} (z)
\mathfrak{X}_{(R , \infty)} \nonumber , 
\end{gather}
and $R \in (r_\varepsilon , \infty)$ is a parameter. Using the continuity arguments in the beginning of the proof one easily shows that 
\begin{equation}
\varlimsup_{z \to 0} |\ln|\ln z^2|^{-1} \mathfrak{n}_\mu (\mathcal{R}_3^{2)}
(z), \varepsilon/3) = 0
\end{equation}
for all values of $R$. From Lemma~\ref{lem:2} it follows that 
\begin{equation}
\varlimsup_{z \to 0} |\ln|\ln z^2|^{-1} \mathfrak{n}_\mu (\mathcal{R}_3^{3)}
(z), \varepsilon/3) = \hbox{o} (1/R) . 
\end{equation}
Thus instead of (\ref{09.72}) for $i=3$ it suffices to prove that 
\begin{equation}
\varlimsup_{z \to 0} |\ln|\ln z^2|^{-1} \mathfrak{n}_\mu (\mathcal{R}_3^{1)}
(z), \varepsilon/3) = 0 
\end{equation}
for all fixed $r_\varepsilon, \varepsilon, R >0$. Like in the proof of Lemma~\ref{lem:2} it suffices to show that 
\begin{equation}
\varlimsup_{z \to 0} |\ln|\ln z^2|^{-1} \mathfrak{n}_\mu (\mathcal{K} (z),
\varepsilon_0) = 0 \label{09.274}
\end{equation}
for all fixed $r_\varepsilon, \varepsilon_0 , R >0$, where 
\begin{equation}
\mathcal{K} (z) = v_1 \mathcal{F}_1^{-1} g_z (|q_1|) \mathcal{F}_1
\mathcal{F}_2^{-1} (p_2^2 +q_2^2 + z^2)^{-1} \chi_{(r_\varepsilon , R]} (|q_2|)
\mathcal{F}_2 v_2 
\end{equation}
acts on the space $L^2 (\mathbb{R}^4)$. Let us split $\mathcal{K} (z)$ into two
parts $\mathcal{K} (z) = \mathcal{K}_1 (z) + 
\mathcal{K}_2 (z)$, where $\mathcal{K}_1 (z) = \mathcal{K} (z) \chi_{[0, R']}
(|x_1|)$ and $\mathcal{K}_2 (z)  = \mathcal{K} (z) \chi_{(R', \infty)} (|x_1|)$
and $R' > 0$ is a parameter. By (\ref{25.08}) and (\ref{traceid})
\begin{equation}
\mathfrak{n}_\mu (\mathcal{K} (z), \varepsilon_0) \leq 2\varepsilon_0^{-1}
\|\mathcal{K}_1 (z)\|_1 + 4\varepsilon_0^{-2} \|\mathcal{K}_2 (z)\|^2_{HS} . \label{09.331}
\end{equation} 
Using formula (\ref{traceidreq}) we obtain the bound 
\begin{gather}
\|\mathcal{K}_1 (z)\|_1^2 \leq c \|v\|_\infty^2 \int d^2 x \int d^2 y (1+|x|^2 +|y|^2)^3
\chi_{[0, R']} (|x|) v^2 \left(\left| \frac 12 x - \frac{\sqrt 3}2 y\right|
\right) \nonumber\\
\times \int_{|p|\leq r_\varepsilon} \frac{d^2 p}{(p^2 + z^2)|\ln (p^2 + z^2)|} 
\int_{r_\varepsilon \leq \left|\frac{\sqrt 3}2 p + \frac 12 q \right| \leq R}
d^2 q 
\frac{(1+p^2 + q^2)^3 }{ (p^2 + q^2 +z^2)^2} \leq c(R') |\ln |\ln z^2|| , \label{09.111}
\end{gather}
where $c(R') \in (0, \infty)$ is a constant, which depends on $R'$. (The first
integral in (\ref{09.111}) converges because $v$ can be bounded by the exponent). 

Let us write the Fourier transform $\mathcal{F}_1$ as a product $\mathcal{F}_1 =
\mathcal{F}_x \mathcal{F}_y$, where $\mathcal{F}_{x,y}$ are partial 
Fourier transforms in variables $x_1$ and $y_1$ respectively. The operator 
$\mathcal{F}_y \mathcal{K}_2 (z) \mathcal{F}_y^{-1}$ can be written as a
product 
$\mathcal{F}_y \mathcal{K}_2 (z) \mathcal{F}_y^{-1} = \mathcal{K}_2^{(1)}
(z)\mathcal{K}_2^{(2)} (z)$, where 
\begin{gather}
 \mathcal{K}_2^{(1)} (z) = \chi_{[0, r_\varepsilon]}(|q_1|) v(|x_1|)\mathcal{F}_x^{-1}  \bigl(p_1^2 +q_1 ^2 + z^2 \bigr)^{-1} \nonumber\\
\times \chi_{[r_\varepsilon , R]} \left(\Bigl| \frac{\sqrt 3}2 p_1 + \frac 12 q_1\Bigr| \right)\mathcal{F}_x
\chi_{(R', \infty)} (|x_1|) \\
\mathcal{K}_2^{(2)} (z) =  g_z (|q_1|)  \mathcal{F}_y v\Bigl(-\frac 12 x_1 +
\frac{\sqrt 3}2 y_1\Bigr) \mathcal{F}^{-1}_y
\end{gather}
The integral operator $ \mathcal{K}_2^{(1)} (z)$ has the kernel $
\mathcal{K}_2^{(1)} (x_1, x_1' ; q_1 , z)$ and acts on $f(x_1 , q_1) \in L^2
(\mathbb{R}^4)$ as follows 
\begin{equation}
 [\mathcal{K}_2^{(1)} (z) f] (x_1 , q_1) = \int \mathcal{K}_2^{(1)} (x_1, x_1' ;
q_1 , z) f(x_1' , q_1) d^2 x_1' . 
\end{equation}
Similarly, $ \mathcal{K}_2^{(2)} (z)$ has the kernel 
\begin{equation}
 \mathcal{K}_2^{(2)} (q_1, q_1' ; x_1 , z) = \frac 2{3\pi} g_z (|q_1|) \exp
\Bigl(\frac{i}{\sqrt 3} (q_1-q_1')\cdot x_1\Bigr) \hat v \bigl((2/\sqrt 3 ) (q_1
- q_1')\bigr) , \label{09.221}
\end{equation}
where $\hat v$ is the Fourier transform of $v$. Now using (\ref{09.221}) we can estimate
the Hilbert-Schmidt norm 
\begin{equation}
  \|\mathcal{K}_2 (z)\|^2_{HS} = \|  \mathcal{K}_2^{(1)} (z) \mathcal{K}_2^{(2)}
(z) \|^2_{HS} \leq \beta d(R') |\ln |\ln z^2|| , \label{09.251}
\end{equation}
where $\beta >0$ is a fixed constant and 
\begin{equation}
 d(R') = \sup_{\substack{|q_1| \leq r_\varepsilon \\ z \in [0, r_\varepsilon ]}} \int
\bigl| \mathcal{K}_2^{(1)} (x_1, x_1' ; q_1 , z) \bigr|^2 d^2 x_1 d^2x_1' . 
\end{equation}
Let us show that $d(R') \to 0$ for $R' \to \infty$. Consider the compact integral operator $\mathcal{G}(q_1, z)$, which depends on the parameters 
$q_1, z$, acts on $L^2(\mathbb{R}^2)$ and has the structure $\mathcal{G} (q_1, z)= v(|x|)g(-i\nabla)$, where 
\begin{equation}
 g(s) = (s^2 + q_1^2 +z^2)^{-1}\chi_{[r_\varepsilon , R]} \left(\Bigl| \frac{\sqrt 3}2 s + \frac 12 q_1\Bigr| \right) . 
\end{equation}
Then it is easy to see that 
\begin{equation}
 d(R') = \sup_{\substack{|q_1| \leq r_\varepsilon \\ z \in [0, r_\varepsilon ]}} \bigl\|\mathcal{G} (q_1, z) \chi_{(R', \infty)} (|x|)\bigr\|^2_{HS} . \label{09.224}
\end{equation}
For fixed $q_1, z$ the operator $\mathcal{G} (q_1, z)$ is Hilbert-Schmidt, this can be checked by using Eq. (4.7) in \cite{17}. Thus for each fixed $q_1, z$ the expression  
under supremum in (\ref{09.224}) goes to zero for $R' \to \infty$. Using the same Hilbert-Schmidt norm estimate it is easy to show that for all $\epsilon >0$ there is $\eta >0$ 
such that $\|\mathcal{G} (q_1, z) - \mathcal{G} (q_1', z')\|_{HS} < \epsilon/2$  if $|q_1 - q_1'| < \eta$ and $|z-z'|< \eta$. We can cover the set 
$\{q_1|\, |q_1|\leq r_\varepsilon\} \cup \{z|\, z \in (0, r_\varepsilon]\}$ with the finite number of points $(q_1^{(i)}, z^{(i)})$ in such a way that for any 
$(q_1, z) \in \{q_1|\, |q_1|\leq r_\varepsilon\} \cup \{z|\, z \in (0, r_\varepsilon]\}$ 
there would exist  $(q_1^{(i_0)}, z^{(i_0 )})$ such that $|q_1 - q_1^{(i_0)}| < \eta$ and $|z-z^{(i_0 )}|< \eta$. Let us set $R'$ so that 
$\max_i \|\mathcal{G} (q_1^{(i)}, z^{(i)}) \chi_{(R', \infty)} (|x|)\bigr\|_{HS} <\epsilon $. Then we would have $d(R') < \epsilon^2$. Since $\epsilon$ is arbitrary 
we conclude that $d(R') \to 0$ for $R'\to \infty$. 

Summarizing, due to (\ref{09.331}), (\ref{09.111}) and (\ref{09.251}) we have the
upper bound
\begin{equation}
 \mathfrak{n}_\mu (\mathcal{K} (z), \varepsilon_0) \leq 2\varepsilon_0^{-1}
[c(R')]^{\frac 12} |\ln |\ln z^2||^{\frac 12} + 4\varepsilon_0^{-2} \beta
d(R')|\ln |\ln z^2|| . 
\end{equation}
Thus it follows that the lhs in (\ref{09.274}) is bounded by a fixed constant times $d(R')$.
Letting $R' \to \infty$ we complete the proof of (\ref{09.72}) for $i=3$. 
\end{proof}

\begin{acknowledgments}
The author would like to thank Artem Volosniev and the anonymous referees for
their critical reading of the manuscript and 
Jesper Levinsen for drawing attention to the reference \cite{jesper}. 
\end{acknowledgments}

\end{document}